\newtheorem{lemma}{Lemma}
\newtheorem{theorem}[lemma]{Theorem}
\newtheorem{cor}[lemma]{Corollary}
\DeclareMathOperator{\Vor}{Vor}
\DeclareMathOperator{\MST}{MST}
\DeclareMathOperator{\GG}{GG}
\DeclareMathOperator{\DG}{DG}
\DeclareMathOperator{\DT}{DT}
\DeclareMathOperator{\CH}{CH}
\newcommand{\Pabs}[1]{\left|#1\right|}
\let\quasi\prec
\newtheorem{proposition}{Proposition}
\author{Boris Aronov\thanks{%
    Department of Computer Science and
    Engineering, Polytechnic Institute of NYU, Brooklyn, New
    York~~11201 USA.  Research partially supported by a grant from the
    U.S.-Israel Binational Science Foundation, by NSA MSP Grant
    H98230-06-1-0016, and NSF Grant CCF-08-30691.}
  \and 
  Muriel Dulieu\footnotemark[2]
  \and
  Ferran Hurtado\thanks{%
    Departament de Matem\`{a}tica Aplicada~II, 
    Universitat Polit\`{e}cnica de Catalunya, 
    Barcelona, Spain.
    Partially supported by projects MEC MTM2006-01267 and DURSI
    2005SGR00692.}}
\title{Witness Gabriel Graphs\thanks{Part of this research was done
    while the third author was visiting Polytechnic, supported by
    Grant AGAUR-Generalitat de Catalunya 2007 BE-1 00033.}}
\begin{document}
\bibliographystyle{plain}
\maketitle

\begin{abstract}
We consider a generalization of the Gabriel graph, the
\textit{witness Gabriel graph}.
Given a set of vertices $P$ and a set of witness points $W$ in the
plane, there is an edge $ab$ between two 
points of $P$ in the witness Gabriel graph \emph{$\GG^-(P,W)$} if and only if the closed disk with diameter $ab$ does 
not contain any witness point (besides possibly $a$ and/or $b$).
We study several properties of the witness Gabriel graph, both as a
proximity graph and as a new tool in graph drawing.
\end{abstract}

\section{Introduction}

Originally defined to capture some concept of neighborliness,
\emph{proximity graphs} \cite{toussaint91some, jaromczykrelative,
DBLP:conf/gd/BattistaLL94} can be intuitively defined as follows:
given a set $P$ of points in the plane, the vertices of the graph,
there is an edge between a pair of vertices $p,q\in P$ if some
specified region in which they interact contains no point from
$P$, besides possibly $p$ and $q$.

Proximity graphs have proved to be a very useful tool in shape
analysis and in data mining \cite{jaromczykrelative,Tou05}. In
graph drawing, a problem that has been attracting
substantial research is to explore which classes of graphs admit a
proximity drawing, for some notion of proximity, and when it is
possible to efficiently decide, for a given graph, whether such a
drawing exists \cite{DBLP:conf/gd/BattistaLL94,gdbook}. For
all these reasons, several variations and extensions have been
considered, from higher-order proximity graphs to the so-called
weak proximity drawings \cite{jaromczykrelative,DLW06}.



\begin{figure}
  \centering
  \input{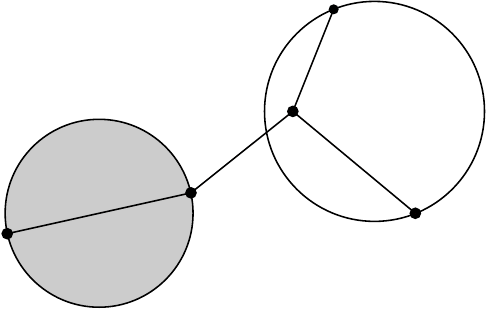_t}%
  \hspace{1ex plus 1fil}%
  \input{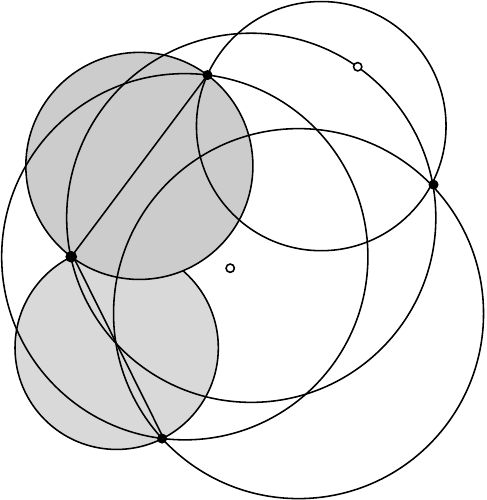_t}
  \caption{Left:~Gabriel graph. The vertices defining the shaded disk
    are adjacent because their disk doesn't contain any other vertex,
    in contrast to the vertices defining the unshaded disk.
    Right:~witness Gabriel graph. Black points are the vertices of the
    graph, white points are the witnesses.  Each pair of vertices
    defining a shaded disk are adjacent and the pairs defining the
    remaining (unshaded) disks are not.}
  \label{GG+GGG}
\end{figure}

In the case of the \textit{Gabriel graph}, $\GG(P)$, the region of
influence of a pair of vertices $a,b$ is the closed disk with diameter
$ab$, $D_{ab}$. An edge $ab$ is in the Gabriel graph of a point set
$P$ if and only if $P\cap D_{ab} = \{a,b\}$ (see
Figure~\ref{GG+GGG}(left)).
Gabriel graphs were introduced by Gabriel and
Sokal \cite{GG} in the context of geographic variation analysis.

We consider in this work a generalization of the Gabriel graph,
the \textit{witness Gabriel graph}, $\GG^{-}(P,W)$. It is defined by
two sets of points $P$ and $W$; $P$ is the set of vertices
of the graph and $W$ is the set of
\emph{witnesses}. There is an edge $ab$ in $\GG^{-}(P,W)$ if, and
only if, there is no point of $W$ in $D_{ab} \setminus \{a,b\}$
(see Figure~\ref{GG+GGG}(right)).

Notice that witness Gabriel graphs are a proper generalization of
Gabriel graphs, because when the set $W$ of witnesses coincides with
the set $P$ of vertices, we clearly obtain $\GG^{-}(P,P)=\GG(P)$. This
was the main underlying idea of the generic concept of \emph{witness
  graphs}, which were introduced as a general framework in \cite{ADH}
to provide a generalization of proximity graphs, allowing witnesses to
play a negative role as in this paper or a positive one as
well. Several examples were described in \cite{ADH,rectangle-graphs},
including in particular witness versions of Delaunay graphs and
rectangle-of-influence drawings. A systematic study is developed in
\cite{thesis}.
As already mentioned in this introduction, generalizing basic proximity 
graphs has attracted several research efforts. This is also our main motivation. On the other 
hand, a witness graph $W(P,Q)$ is an instrument for describing the position of $P$ with respect to $Q$. 
We believe that once these graphs are well understood, by considering simultaneously $W(P,Q)$ and 
$W(Q,P)$ we would have useful tools for the description of the interaction/discrimination between the 
two sets; this is a main topic of our ongoing research.

In this paper we prove several fundamental properties of witness
Gabriel graphs, describe algorithms for their computation, and 
present results on the realizability of some combinatorial
graphs.

We assume throughout the paper that the points in $P \cup W$ are in
general position, that is, that there are no three points of $P \cup
W$ on a line and no four on a circle.

\section{Some Properties of Witness Gabriel Graphs}

It is known that $\MST(P) \subseteq \GG(P) \subseteq \DT(P)$
\cite{bb21378}, where $\MST(P)$ is the minimum spanning tree and
$\DT(P)$ is the Delaunay triangulation.  As a consequence, 
$\Pabs{\MST(P)} \leq \Pabs{\GG(P)} \leq \Pabs{\DT(P)}$, where we have used $\Pabs{\cdot}$
to denote the number of edges in a graph.  Expressing this in terms of $n = \Pabs{P}$, we have that $n-1 \leq \Pabs{\GG(P)} \leq 3n - 6$.
In \cite{MS84}, a more detailed  analysis gives a tighter upper bound of $3n - 8$.

For witness Gabriel graphs $\GG^{-}(P,W)$, the situation is quite
different, as for any fixed set $P$ of $n$ points, by varying the
size of $W$ and the location of the witnesses, the number of edges
in $\GG^{-}(P,W)$ can attain any value from 0 to $n \choose 2$.
For example, when $W=\varnothing$, we obviously obtain
$\GG^{-}(P,\varnothing)=K_n$. 

\begin{theorem}
  For any set $P$ of $n$ points in the plane, a witness Gabriel graph
  $\GG^{-}(P,W)$ can have any number of edges from 0 to $n \choose 2$
  edges, by a suitable choice of the set $W$ of witnesses.
\end{theorem}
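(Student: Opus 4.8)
The plan is to construct $W$ incrementally and remove edges one at a time. The two extreme values need no work: $W=\varnothing$ gives $\GG^-(P,\varnothing)=K_n$, which has $\binom{n}{2}$ edges, so it suffices to prove the following one-step claim: \emph{whenever $\GG^-(P,W)$ has at least one edge, there is a point $w$ such that $\GG^-(P,W\cup\{w\})$ has exactly one fewer edge.} Starting from $W=\varnothing$ and applying this $\binom{n}{2}-k$ times yields a witness set with exactly $k$ edges, for any $0\le k\le\binom{n}{2}$. At each step we will also be free to keep $P\cup W$ in general position, since the admissible locations for $w$ will form a nonempty open set while the forbidden configurations lie on finitely many lines and circles.

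To prove the one-step claim, let $\mathcal D=\{D_e : e \text{ an edge of }\GG^-(P,W)\}$, a nonempty finite family of closed disks. It is enough to find a point $w$ that lies in the interior of exactly one disk of $\mathcal D$ and on no bounding circle: such a $w$ is not a point of $P$ (the points of $P$ lie on the bounding circles), so adding $w$ to $W$ destroys exactly the single edge whose disk contains $w$, and, since adding witnesses can only delete edges, the net change is $-1$. To locate $w$, look at the arrangement of the circles $\{\partial D_e\}$ and give each face the \emph{depth} equal to the number of disks of $\mathcal D$ containing it. The unbounded face has depth $0$; crossing a single circular arc away from the finitely many circle--circle intersections changes depth by exactly $\pm1$. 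Since $\mathcal D\ne\varnothing$, the set $\bigcup\mathcal D$ is a nonempty compact region, and at a generic point of its outer boundary --- which lies on some $\partial D$ and bounds the unbounded face --- crossing $\partial D$ takes us from the unbounded face into $D$ and into no other disk, i.e.\ into a face of depth exactly $1$; any interior point of that face is a valid $w$. Equivalently, one may shoot a generic ray from a far-away point toward the center of some $D_e$: at the first moment it meets $\bigcup\mathcal D$ it crosses one circle transversally, and points just past that crossing lie in that disk only.

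The only genuine obstacle is the one-step claim, and the subtlety there is that one should not hope for a point that is private to a \emph{prescribed} disk: a disk of $\mathcal D$ may be entirely covered by the union of the others. The depth argument avoids this because it uses only the outer boundary of $\bigcup\mathcal D$, which is never empty when $\mathcal D\ne\varnothing$. Granting the claim, the induction, the two endpoint cases, and the general-position bookkeeping are all routine.
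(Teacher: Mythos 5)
Your proof is correct and is essentially the paper's argument: the paper also removes edges one at a time by placing a witness in a point of the plane covered by exactly one remaining diametral disk, which it finds on the relative interior of a circular arc of the boundary of the union of those disks (perturbed slightly inward), i.e.\ precisely your depth-$1$ face adjacent to the outer boundary. Your depth/ray-shooting formulation just makes explicit the same observation that such a singly-covered point always exists when the union is nonempty.
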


\begin{proof}
  Consider any given graph $GG^-(P,W)$ and take the union $U$ of the
  diametral disks $D_{p_i p_j}$, $p_i, p_j \in P$, that do not contain
  a point $q \in W$.  The boundary of the union consists of circular
  arcs $C_{p_i p_j}$ of disks $D_{p_i p_j}$, for some $p_i$, $p_j$
  $\in P$.  Put a point $q \in W$ in the relative interior of one such
  arc $C_{p_i p_j} \setminus \{p_i, p_j\}$.  Point $q$ lies in the
  closed disk $D_{p_i p_j} \setminus \{p_i, p_j\}$. By construction,
  it lies outside all other disks.  Therefore adding $q$ to $W$ would
  eliminate precisely one edge, namely $(p_i,p_j)$.\footnote{%
    This choice of $q$ is in some sense degenerate, but $q$ can be
    moved slightly into the interior of $D_{p_i p_j}$ without
    affecting the argument.} %
  By iterating this procedure to remove the edges one by one from the
  witness Gabriel graph, one can see that any number of edges can be
  attained (see Figure~\ref{GGGedges}).
  \begin{figure}
    \centering
    \input{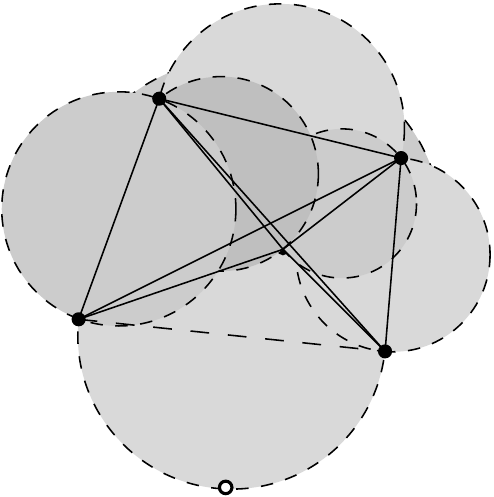_t}
    \caption{The white point is a witness that removes the dashed
      edge.}
    \label{GGGedges}
  \end{figure}
\end{proof}

The reverse problem is more interesting: as the witness points can be
thought as \emph{interferences} that prevent the points in $P$ from
being adjacent, one may wonder how many witnesses are required to
completely eliminate all edges in $\GG^{-}(P,W)$. Trivially, if there
is a witness inside each disk $D_{ab}$, for all $a,b \in P$, then
$\GG^{-}(P,W)$ has no edge.  This can be achieved, for instance, by
putting a witness close to the midpoint of every pair $a,b$ of points
from $P$, which would give $\Pabs{W} \leq {n \choose 2}$.  In the following
theorem we present a much better bound for the number of witnesses
necessary to eliminate all edges of $\GG^{-}(P,W)$.

\begin{theorem}
\label{thm:eliminate-all} $n-1$ witnesses are always sufficient to
eliminate all edges of an $n$-vertex witness Gabriel graph, while
$\frac{3}{4}n - o(n)$ witnesses are sometimes necessary.
\end{theorem}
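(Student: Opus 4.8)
The plan is to exploit the containment $\MST(P)\subseteq\GG(P)$, but run "downward": I want to destroy all $\GG^-$ edges, and a single witness can kill many edges at once if it sits in the intersection of many diametral disks. The cleanest approach: take a point $c$ that lies inside $D_{ab}$ for as many pairs $a,b$ as possible. A natural candidate is a "deep" point such as a centerpoint or the point minimizing the largest empty diametral disk; but the slickest route is probably a recursive halving argument. Concretely, I would try the following: pick a line $\ell$ that splits $P$ into two halves $P_1,P_2$ of sizes $\lceil n/2\rceil,\lfloor n/2\rfloor$ and place a single witness cleverly so that it lies in $D_{ab}$ for every pair $a\in P_1$, $b\in P_2$ — if such a point exists, all "crossing" edges die at once, and one recurses on $P_1$ and $P_2$, giving the recurrence $f(n)\le f(\lceil n/2\rceil)+f(\lfloor n/2\rfloor)+1$, i.e. $f(n)\le n-1$. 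The obstacle is that a single point need not lie in all crossing diametral disks. So more plausibly the intended argument builds a binary tree whose internal nodes correspond to witnesses and leaves to points of $P$: one proves that for a suitable recursive partition, each internal node admits a witness killing all edges "across" that node's cut that are not already killed lower down. I expect the real construction picks the splitting direction/point so that consecutive points along that direction have their diametral disk pierced — essentially mimicking a path/MST structure — so that $n-1$ witnesses placed near midpoints of a Hamiltonian path on $P$ do the job once one argues every diametral disk of a $\GG$-edge contains such a midpoint. Establishing that "every potential edge's disk contains one of these $n-1$ chosen points" is the crux.

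**Lower bound ($\tfrac34 n - o(n)$ witnesses sometimes necessary).** Here I would construct an explicit point configuration $P$ in which any witness can "cover" (lie in the diametral disk of) only a bounded, and on average small, number of point-pairs, so that $\Omega(n)$ witnesses are forced, with the constant pushed to $3/4$. The natural building block is a cluster of $4$ points arranged so that the six diametral disks they define have empty common intersection, and in fact no single point lies in more than... — the precise combinatorial fact to isolate is: for a well-chosen gadget on a small number of points, a single witness kills only a limited number of the gadget's required edges, forcing roughly $3/4$ as many witnesses as points within the gadget. I would take $P$ to be $\Theta(n)$ far-apart copies of this gadget (far apart so no witness can simultaneously serve two gadgets — a disk with diameter $ab$ for $a,b$ in different gadgets is enormous and can be avoided, or rather: we only need to kill edges, and edges between gadgets can be handled "for free" by the same witnesses, or shown not to be the bottleneck). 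Summing the per-gadget lower bound over all $\Theta(n)$ gadgets yields $\tfrac34 n - o(n)$.

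**Main obstacles.** For the upper bound, the difficulty is proving that the recursive/path-based placement of $n-1$ witnesses actually hits every diametral disk that would otherwise be an edge — i.e., that no "long" pair survives; this likely needs a geometric lemma that the diametral disk of any pair $a,b\in P$ contains a point of the chosen witness set, argued via the recursion invariant (every pair is "separated" at some node of the tree whose witness lies in its disk). For the lower bound, the obstacle is twofold: (i) designing the $4$-point (or similar) gadget and proving the tight local bound that forces the ratio $3/4$ rather than something weaker like $1/2$ or $2/3$, and (ii) controlling the inter-gadget edges so they do not let a few cleverly placed witnesses undercut the count — placing gadgets along a generic convex curve with rapidly varying scales should make any inter-gadget diametral disk so large that it swallows an entire other gadget, which is harmless (it can only help the adversary's lower bound if such disks are forced to stay empty, so one must instead argue the adversary may *add* witnesses there at no asymptotic cost, or that those edges are absent). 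I expect step (i) — the exact gadget giving the $3/4$ constant — to be the heart of the matter.
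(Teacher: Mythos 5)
You have not actually proved either half of the theorem; in both directions the step you yourself flag as ``the crux'' is the missing content, and the concrete candidates you float for the upper bound do not survive scrutiny. The halving idea fails for the reason you note (no single point need lie in every crossing diametral disk). The Hamiltonian-path/midpoint idea is also false as stated: for $p_1=(0,0)$, $p_2=(1,10)$, $p_3=(2,0)$ and the path $p_1p_2p_3$, the two edge midpoints $(0.5,5)$ and $(1.5,5)$ both lie outside $D_{p_1p_3}$ (the disk of radius $1$ centered at $(1,0)$), so the surviving pair $p_1p_3$ keeps its edge. The paper's construction sidesteps all of this by attaching witnesses to \emph{vertices}, not to midpoints: rotate so that all $x$-coordinates are distinct and place a witness at $a+(\varepsilon,0)$ for each $a\in P$ except the rightmost point. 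For any pair $a,b$ with $a$ the left endpoint, the inward normal of $D_{ab}$ at $a$ is the direction of $b$, which has positive $x$-component, so a sufficiently small displacement of $a$ in the $+x$ direction lands strictly inside $D_{ab}$; one $\varepsilon$ works for all finitely many pairs at once. Hence every diametral disk contains the witness attached to its leftmost endpoint, and $n-1$ witnesses suffice. This is the geometric lemma you were missing: every disk $D_{ab}$ contains an entire half-neighborhood of each of its endpoints, so you only need one witness per point, not one per ``separating cut.''

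Your lower-bound plan is the right idea in outline --- exhibit $\Theta(n)$ mandatory disks such that no admissible witness lies in more than two of them --- but the gadget carrying the constant $3/4$ is precisely what is absent, and the most natural candidate fails: for four points at the corners of a square, the center of the square lies in all six closed diametral disks, so a single witness wipes out the whole gadget. The paper does not use separated gadgets; it perturbs the hexagonal lattice, takes for each vertex the three disks spanned by it and its three nearest neighbours (about $\frac{3}{2}n-o(n)$ disks in all), and observes that the three disks meeting at a vertex $p$ intersect only at $p$ itself, which is a point of $P$ and hence not a location that eliminates those edges; therefore every witness stabs at most two of the chosen disks and $\frac{3}{4}n-o(n)$ witnesses are forced. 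Your far-apart-copies framing would work once such a local configuration and the ``at most two disks per witness'' fact are in hand (killing all edges in particular kills the short intra-cluster ones, and the short disks of distinct clusters are disjoint), but without them the constant $3/4$ --- indeed any constant above the trivial ones --- is not established.
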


\begin{proof}
We start with a lower bound construction.
Place the points of $P$ at the vertices of a hexagonal tiling (see Figure~\ref{GGlower+upperbound}) 
\begin{figure}
  \centering
  \input{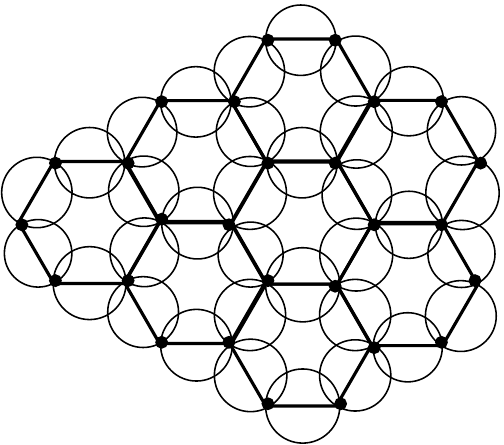_t}%
  \hspace{0pt plus 1fil}%
  \input{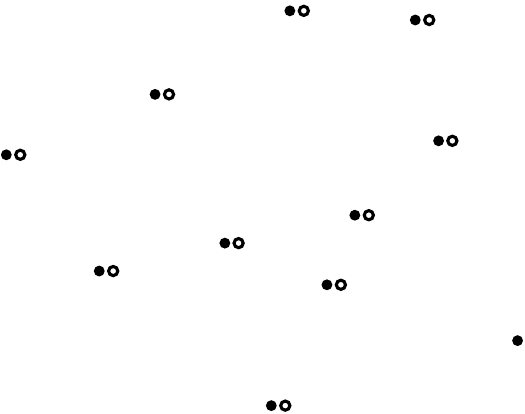_t}
  \caption{Left: Lower bound---any point in $W \setminus P$ can
    intersect at most 2 disks.  Right:~Upper bound---points in $P$ are
    black and witnesses are white.}
  \label{GGlower+upperbound}
\end{figure}
and move them slightly so they are in general position. 
For each point $p$ in $P$ consider the three edges connecting it to the three closest points, and the closed disks that have these edges as diameter.
By definition of the witness Gabriel graph, no point $q \in W$ at the intersection of three disks eliminates one of the considered edges as $q$ would be at a point $p \in P$ defining the three disks.
For any other position of $q \in W$, $q$ never intersects more than 2 disks.
Hence since we have $\frac{3}{2} n - o(n)$ disks, at least $\frac{3}{4} n - o(n)$ points in $W$ are necessary to stab all the disks and eliminate the corresponding edges in the witness Gabriel graph.


Now we argue the upper bound.  Without loss of generality, assume no
two points of $P$ lie on the same vertical lines---this can be
achieved by an appropriate rotation of the coordinate system.  Put a
witness slightly to the right of each point of $P$, except for the
rightmost one (see Figure~\ref{GGlower+upperbound}).
Every disk with diameter determined by two points of $P$ will contain a witness.
\end{proof}

According to the preceding result, $n-1$ suitable witnesses can
always eliminate \emph{all} the edges. Interestingly, realizing some
witness Gabriel graphs that are not empty of edges may require strictly more witnesses.

\begin{theorem}
\label{thm:eliminate-some} For arbitrarily large $n$, there exist witness Gabriel graphs on
$n$ vertices that are not empty of edges, for which at least $\frac{3}{2} n - 8$ witnesses are
necessary.
\end{theorem}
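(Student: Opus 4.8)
The plan is to exhibit a point set $P$ on $n$ points together with a \emph{nonempty} graph $G$ that is realizable as $\GG^{-}(P,W)$, and then show that every witness set $W$ realizing it must be large. The lower‑bound engine is the same disjointness idea used in Theorem~\ref{thm:eliminate-all}: if we can confine each of roughly $\frac32 n$ ``forbidden'' diametral disks to its own private region, then any realizing $W$ must spend one witness per region. The obstacle that was absent in Theorem~\ref{thm:eliminate-all} is that a single witness could lie in the overlap of two disks and so kill two edges at once, costing a factor of two. The new idea is to forbid those overlap regions outright by declaring suitable pairs of $P$ to be \emph{edges} of $G$ (edge‑disks must be witness‑free), which simultaneously guarantees that $G$ is not empty.

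Concretely, I would take $P$ to be a patch of a (possibly slightly distorted) hexagonal mesh, perturbed into general position, and let the $m=\frac32 n-O(1)$ mesh edges $e_1,\dots,e_m$ be the pairs whose diametral disks $D_{e_i}$ we want forbidden; these are the nonedges of $G$. Two of these disks meet only when $e_i,e_j$ share a mesh vertex $v$, and then $D_{e_i}\cap D_{e_j}$ is a small lens near $v$ (three such lenses at each internal vertex). For each lens I want a \emph{blocker pair} $\{p,q\}\subseteq P$ whose diametral disk $D_{pq}$ contains that lens but contains no (small neighborhood of a) midpoint of any mesh edge. I let $G$ be the graph on $P$ whose edge set is exactly the blocker pairs; in particular $G\neq\emptyset$.

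Next I would check realizability: put $W=\{w_1,\dots,w_m\}$ with $w_i$ near the midpoint of $e_i$. Then (a) $w_i\in D_{e_i}$ kills $e_i$, so no mesh edge survives; (b) by construction no blocker disk contains any $w_i$, so every blocker pair is an edge of $\GG^{-}(P,W)$; (c) every other pair $\{p,q\}$ of $P$ has a diametral disk large enough and positioned over the mesh to contain some mesh‑edge midpoint, hence some $w_i$, so it is a nonedge. Thus $\GG^{-}(P,W)=G$. For the lower bound, let $W'$ be any set with $\GG^{-}(P,W')=G$. For each mesh edge $e_i$, $D_{e_i}$ must contain a point of $W'$ that lies outside every edge‑disk of $G$; set $R_i:=D_{e_i}\setminus\bigcup_{b}D_{b}$, the union over blocker pairs $b$. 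If $e_i,e_j$ share no vertex then $D_{e_i}\cap D_{e_j}=\emptyset$, so $R_i\cap R_j=\emptyset$; if they share a vertex, $D_{e_i}\cap D_{e_j}$ is a lens contained in $\bigcup_b D_b$, so again $R_i\cap R_j=\emptyset$. Hence the $R_i$ are pairwise disjoint, each meets $W'$, and $|W'|\ge m$; a careful count on an appropriately shaped patch brings the additive loss down to $8$.

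The hard part is the middle step: choosing the mesh geometry and the blocker pairs \emph{simultaneously} so that every lens is covered by blocker disks while no blocker disk swallows a mesh‑edge midpoint (or, worse, an entire mesh‑edge disk $D_{e_i}$, which would make $e_i$ impossible to kill). For the rigid regular honeycomb this balance is delicate—disks large enough to cover a lens tend to reach nearby midpoints—so the mesh should be distorted just enough that each lens is covered by the diametral disk of some \emph{already present} pair; doing this without introducing extra vertices, which would spoil the $\frac32 n$ ratio, is the crux. A secondary, routine‑but‑fiddly point is verifying (c) for every non‑mesh, non‑blocker pair, including pairs near the boundary of the patch, and accounting for the boundary mesh edges, which is where the constant $8$ is pinned down.
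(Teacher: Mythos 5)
Your high-level strategy is the same one the paper uses---identify roughly $\tfrac32 n$ non-edges whose diametral disks must each be stabbed, and use the \emph{edges} of the graph (whose disks must stay witness-free) to cover the pairwise overlaps of those disks, so that each must-stab disk gets a private witness---but the concrete construction that makes this work is missing, and you flag the missing piece yourself as ``the crux.'' That gap is real and not merely fiddly. For the regular hexagonal mesh, the lens $D_{vu}\cap D_{vw}$ at a vertex $v$ is a sliver reaching from $v$ to a point at distance $|vu|/2$ from $v$ along the bisector of $\angle uvw$, i.e.\ exactly as far from $v$ as the edge midpoints you want to avoid, only rotated by $60^\circ$; the natural blocker candidate $D_{uw}$ (for $u,w$ the two neighbours of $v$) does contain the lens but also contains the midpoint of $uv$ (a direct computation: with $v$ at the origin and unit edges, the center of $D_{uw}$ is at distance $1/2$ from that midpoint while its radius is $\sqrt3/2$), so it would be killed by your own witness set. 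Whether \emph{any} choice of blocker pairs from the existing mesh vertices works, even after distortion, is precisely what a proof must establish, and nothing in the proposal does so; realizability of $G$ (your conditions (b) and (c)) is likewise asserted rather than checked.

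There is a second, independent obstruction to the stated bound: any patch of the hexagonal lattice on $n$ vertices has $\Theta(\sqrt n)$ boundary vertices of degree at most two, hence only $\tfrac32 n-\Theta(\sqrt n)$ mesh edges. So even if every other step went through, your argument yields $\tfrac32 n - O(\sqrt n)$ witnesses, not $\tfrac32 n - 8$; the additive loss cannot be ``brought down to $8$'' by reshaping the patch. The paper avoids this by using a topologically closed construction---concentric rings of sixteen points each, with the surviving edges forming two interleaved octagons per ring---so that the only boundary loss comes from the innermost and outermost rings and is genuinely $O(1)$. If you want to salvage your approach, you should either wrap your mesh around an annulus-like structure or accept the weaker $\tfrac32 n - o(n)$ statement, and in either case you still owe the explicit blocker construction together with the verification that the proposed witness set realizes exactly the intended graph.
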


\begin{proof}

Put the vertices on concentric circles, sixteen vertices evenly spaced per circle, the first vertex on each circle always on top of the disk (see Figure~\ref{GGNumWUpB}).
The ratio between two consecutive circle radii must be between $1{:}1.82$ and $1{:}1.92$.
On each circle, number the vertices clockwise, starting at the top.
Number circles starting at the innermost.
The edges of the geometric graph are the ones between consecutive even vertices on even circles and the ones between consecutive odd vertices on odd circles (see Figure~\ref{GGNumWUpB}).
\begin{figure}
  \centering
  \input{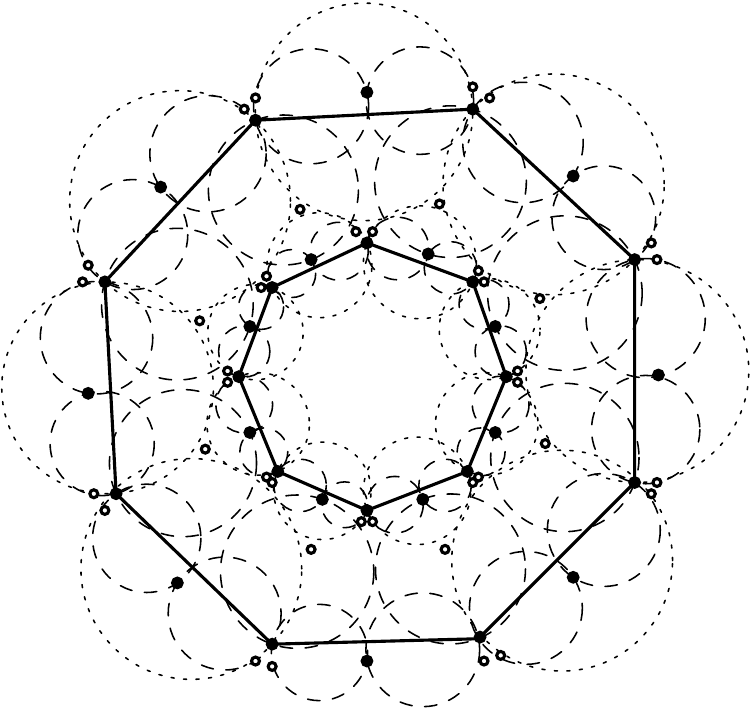_t}
  \caption{At least $\frac{3}{2} n - 8$ witnesses (the white points)
    are sometimes necessary. The black points are the $n$ vertices of
    the graph.}
  \label{GGNumWUpB}
\end{figure}

Exactly $n$ witnesses are necessary to remove the edges between every
pair of consecutive vertices on each circle, without removing the
edges between even (or odd) vertices.  In addition, $\frac{n}{2} - 8$
witnesses are necessary to remove the edges between corresponding
vertices on two consecutive circles. Summing up, we see that
$\frac{3}{2} n - 8$ witnesses are required, as claimed.
\end{proof}

\section{Witness Gabriel Drawings}

Given a combinatorial graph $G=(V,E)$, there is a \textit{witness Gabriel
drawing} of $G$ if there is a point set $P$ and a set of witnesses $W$ such that the witness Gabriel graph of $P$ and $W$ is isomorphic to $G$. 
This witness Gabriel graph of $P$ and $W$ is a witness Gabriel drawing of $G$.

The fundamental question concerning witness Gabriel
drawability is the following: Given a graph $G$, is it possible to
construct a witness Gabriel drawing of $G$?
That question has been studied for (standard) Gabriel graphs. In
\cite{bose93characterizing} Bose et al.~present a complete
characterization of those trees that are drawable as a Gabriel
graph. They proved that such trees cannot have vertices of degree greater
than four and cannot have two adjacent vertices of degree four.
They also characterized Gabriel-drawable trees by exhibiting families
of forbidden subtrees and by showing that they don't contain members of
these families.

Lubiw and Sleumer \cite{LS93} showed that all maximal outerplanar
graphs admit a Gabriel drawing in the plane. They also conjectured
that any biconnected outerplanar graph has a Gabriel drawing. This
was settled in the affirmative by Lenhart and Liotta \cite{
lenhart96proximity}.

As every Gabriel graph is also a witness Gabriel graph, since
$\GG(P)=\GG^{-}(P,P)$, one may expect the witness Gabriel graphs to be
a more powerful tool for representing graphs, compared to classical
Gabriel graphs.  This is indeed the case for trees.

\begin{theorem}
  \label{thm:trees}
  Any tree can be drawn as a witness Gabriel graph.
\end{theorem}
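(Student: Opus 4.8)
The plan is to build the point set $P$ and the witness set $W$ explicitly. Since an edge $ab$ of a witness Gabriel graph is present exactly when $D_{ab}$ is free of witnesses, realizing the abstract tree $T$ means: draw $T$ as a straight-line graph on a point set $P$ and choose witnesses $W$ so that $D_{ab}$ is witness-free for every edge $ab$ of $T$ while $D_{ab}$ contains a witness for every non-edge $ab$. The clean way to get both at once is to produce a drawing of $T$ in which, for every non-edge $ab$,
\[
  D_{ab}\ \setminus\ \bigcup_{cd\in E(T)}D_{cd}\ \neq\ \emptyset ;
\]
one then drops a single witness into that leftover region for each non-edge (perturbing so that $P\cup W$ stays in general position), and since no witness ever lands inside a tree-edge disk, the tree edges automatically survive. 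So everything reduces to drawing $T$ so that no non-edge's diametral disk is swallowed by the union of the tree edges' diametral disks; the base cases (a single vertex or a single edge) are immediate.

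To get such a drawing I would root $T$ at a vertex $r$ and construct it recursively at geometrically separated scales. Place $r$ at the origin, place the children $c_1,\dots,c_k$ of $r$ at a common distance $R$ from $r$ in a narrow cone of directions, and recursively draw each subtree $T_i$ hanging from $c_i$ inside a tiny ``blob'' of diameter $\delta\ll R/k^2$ sitting at $c_i$; the recursion is carried out so that the whole recursively drawn copy of $T_i$ --- vertices and the witnesses it uses --- stays inside its blob, hence outside the parent disk $D_{rc_i}$, so those recursive witnesses never destroy the edge $rc_i$. Because the scale collapses by a huge factor from level to level, every tree-edge disk is tightly localized near its edge; in particular a disk of radius about $R/k^2$ around $c_i$ (minus $c_i$ itself) lies outside $D_{rc_j}$ for every sibling $j\neq i$, which is the geometric fact the construction rests on.

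It remains to verify the nonemptiness of the leftover region for each kind of non-edge $ab$. If $a$ and $b$ lie in different subtrees, $D_{ab}$ is a huge disk and there is abundant room inside it, far from the origin and from every blob and therefore from every tree-edge disk, in which to place a witness; if $a$ and $b$ lie in the same subtree, this is handled by the recursion. The remaining case --- the one I expect to be the crux --- is a non-edge $rb$ with $b$ a proper descendant of some child $c_i$: here $D_{rb}$ nearly coincides with $D_{rc_i}$, the two differing only in a thin sliver, and the witness that kills $rb$ must be squeezed into that sliver while dodging $D_{rc_i}$, all the sibling disks $D_{rc_j}$, and the minuscule disks of $T_i$'s own edges. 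The whole burden of the proof sits here: one has to choose the scale ratio between consecutive levels, the radius $R$, the width and orientation of the cone of children's directions, and the side of the segment $rc_i$ on which the blob of $T_i$ is placed, so that this sliver always reaches into territory untouched by any tree edge. Once those parameters are pinned down, the rest of the verification --- in particular the fact that the tree edges survive, which needs nothing at all since no witness is ever placed inside a tree-edge disk --- is routine.
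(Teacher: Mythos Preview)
Your outline is honest about where the difficulty lies, but it stops short of a proof exactly there. You reduce everything to the ``sliver'' case---finding, for each non-edge $rb$ with $b$ a proper descendant of a child $c_i$, a point of $D_{rb}\setminus D_{rc_i}$ that avoids every tree-edge disk---and then say only that suitable parameters exist. That sliver is a crescent of width $O(\delta)$ hugging $\partial D_{rc_i}$; near $r$ it is swallowed by the sibling disks $D_{rc_j}$, while near $c_i$ it runs into the blob of $T_i$ and hence into the tiny disks of $T_i$'s own edges. Whether an uncovered arc survives in between depends on a genuine competition among the cone width, the scale ratio, and the direction of the blob relative to $rc_i$, and you have not resolved it. A related loose end: for the recursion to work, the witnesses produced inside $T_i$'s blob must themselves avoid the parent disk $D_{rc_i}$; since $c_i\in\partial D_{rc_i}$, this forces the blob (and the cone at $c_i$) to lie beyond the tangent line at $c_i$, which further constrains where the next level's slivers can sit. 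Until the parameters are actually pinned down and these constraints checked, what you have is a strategy, not a proof.

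The paper avoids all of this with a different mechanism. It lays $T$ out with \emph{wide} angles---the children of a node $h$ are spread $\alpha_h$ apart around $h$---and geometrically shrinking edge lengths, then places just two witnesses $w_1,w_2$ for each non-root vertex $j$, immediately outside the parent disk $D_{jh}$ at angular offset $\pm\alpha_h/2$ as seen from $h$. These two witnesses cut out a wedge $F_j$ containing $j$ together with its whole subtree; any vertex $p\notin F_j$ satisfies $\angle p\,w_i\,j>90^\circ$ for one of the $w_i$, so $w_i\in D_{pj}$ and the edge $pj$ is killed. Applying this at every non-root node shows each vertex is adjacent only to its parent and children. The payoff over your approach is twofold: only $2(n-1)$ witnesses instead of one per non-edge, and the verification is a single angle inequality rather than a sliver-by-sliver case analysis.
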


\begin{proof}

  We construct a witness Gabriel drawing of a given tree $T$ as
  follows.  We assume, without loss of generality, that the tree is
  rooted.  Draw the root of $T$ as an arbitrary point.  Number the
  nodes of $T$ arbitrarily with 1 as root.  Order the children of
  every node arbitrarily.

  With each node $j$, we associate two values : $\alpha_{j}$ is the
  measure of the angles $\angle kjl$ with $k,l$ being two consecutive
  children of $j$, and $d_j$ is the number of children of the node
  $j$.  Whenever it causes no confusion, in the remainder of the proof
  we do not distinguish between a vertex of $T$ and the point
  representing it.

  For the special case of the root (node~1), $\alpha_{1}$ is set to
  $360^\circ / (d_1+1)$.  For every other node $j$ we define
  $\alpha_{j} = \alpha_{h} / d_j$ , with $h$ being the parent of $j$.

  Recursively, for each node $j$, beginning with the root $1$, draw
  the $d_j$ children such that the angles between two edges incident
  to two consecutive children of $j$ are $\alpha_{j}$ and the angles
  between the edges incident to an extremal child and the parent of
  $j$ are $\frac{360^\circ - \alpha_{j} \times (d_j -1)}{2}$.

  The length of the edges is defined as follows.  All the edges
  incident to the root have length 1.  Consider a node $j$ at depth
  $i$, its parent $h$ and its child $k$.  We set the length $\Pabs{jk}$ of
  the edge $jk$ to $\frac{1}{2} \Delta_j$, with $\Delta_j = \Pabs{hj} \sin
  \frac{\alpha_{h}}{2}$ (see Figure~\ref{GGTree1}).  
  \begin{figure}
    \centering
    \input{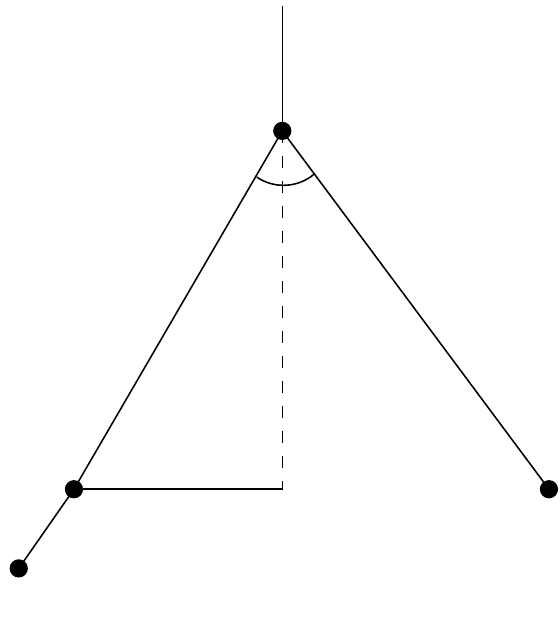_t}
    \caption{vertices $h,j,k$ and the value of $\Delta_j$.}
    \label{GGTree1}
  \end{figure}

  The way the angles between consecutive children and the length of
  the edges are defined ensure that two edges never cross.  Indeed, by
  construction, the Euclidean length of any path connecting $j$ to its
  descendant in the tree is shorter than $\Delta_j$ and the same is
  true of any of its siblings $j'$.  Since disks of radius
  $\Delta_j=\Delta_{j'}$ centered at $j$ and $j'$ do not overlap, the
  two paths cannot cross.

  Now we shall place the witnesses.  For every edge $jh$ of the new
  tree connecting a non-root node $j$ to its parent $h$, draw a
  Gabriel disk $D_{jh}$ with diameter $jh$.  Place two witnesses
  $w_1$ and $w_2$ on both sides of $j$ such that they make an angle
  $\measuredangle jhw_i = \frac{\alpha_h}{2} $, $i=\{1,2\}$ and such that they
  are very close but outside the Gabriel disk $D_{jh}$ (and outside
  the Gabriel disks of $h$ and its other children) (see
  Figure~\ref{GGTree2}).
  \begin{figure}
    \centering
    \input{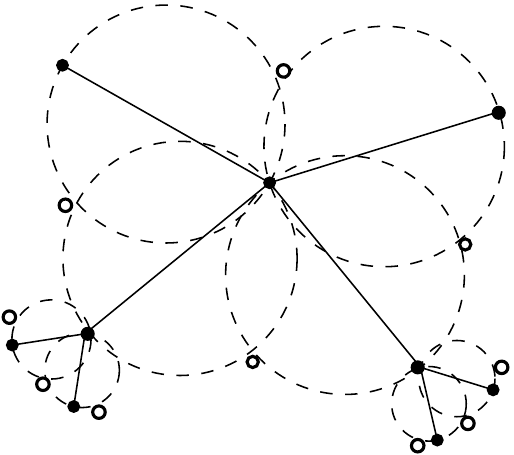_t}
    \caption{The vertices of the graph are black and the witnesses are
      white.}
    \label{GGTree2}
  \end{figure}

  Let $F$ be the intersection of two half planes $H_1$, $H_2$ defined
  as follows: $H_i$ is the half plane containing $j$ bounded by a line
  through $w_i$ and perpendicular to $j w_i$.  By construction $j$ and
  all of its descendants lie in $F$ since they are contained in a disk
  of radius strictly smaller than $\Delta_j$ centered at $j$.  The
  placement of $w_1$, $w_2$ guarantees that no edge exists between 
  any vertex in the subtree of $j$ and
  the vertices outside of $F$ (see Figure~\ref{GGTreeWedge}).
  \begin{figure}
    \centering
    \input{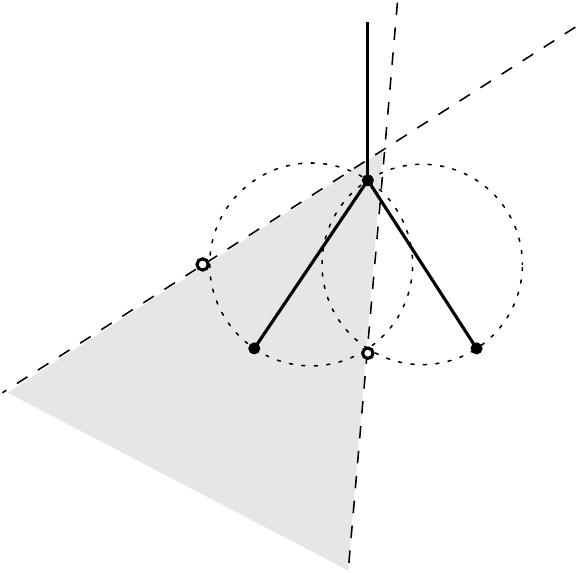_t}
    \caption{The wedge $F$ around $j$ ensures that $j$ will be
      connected only to its parent and its children.}
    \label{GGTreeWedge}
  \end{figure}
  Any point $p \in P$ outside $F$ would make one of the angles $\angle
  p w_i j$, $i =\{1,2\}$, larger than $90^\circ$ and therefore would
  not be connected to $j$.  Applying this reasoning to every node
  $j\neq 1$ in $T$, we conclude that each node is connected only to
  its parent and its children.
  Therefore $GG^-(P,W)$ is isomorphic to $T$.
\end{proof}

We also prove that one can draw any complete bipartite graph as a
witness Gabriel graph.

\begin{theorem}
  Every complete bipartite graph can be drawn as a witness Gabriel graph.
\end{theorem}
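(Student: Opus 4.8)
The plan is to realise $K_{m,n}$ by placing its two colour classes $A=\{a_1,\dots,a_m\}$ and $B=\{b_1,\dots,b_n\}$ as two tight, far-apart clusters and then adding witnesses that destroy every monochromatic pair while leaving every bichromatic pair alone. Concretely, put $A$ on a very short, nearly vertical segment near the origin, say $a_i\approx(0,i\varepsilon)$, and $B$ on a very short, nearly vertical segment near $(L,0)$, with $L$ enormous compared with $\varepsilon$ and with $m+n$. The structural point to keep in mind is that in a witness Gabriel graph only points of $W$ --- never other points of $P$ --- can kill an edge; hence a bichromatic pair $a_ib_j$ is an edge of $\GG^{-}(P,W)$ unless some witness falls in the disk $D_{a_ib_j}$, and all of these ``cross disks'' are essentially one huge disk whose diameter joins the two clusters. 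So it is enough to find a witness set that stabs every $D_{a_ia_j}$ and every $D_{b_ib_j}$ yet avoids every cross disk.

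For the class $A$ I would use near-collinearity of the $a_i$. The disk $D_{a_ia_{i+1}}$ of a consecutive pair is the smallest one; a point placed just inside it and nudged a little towards the side of the $A$-segment that faces away from $B$ (to the left) still lies inside $D_{a_ka_l}$ for every $k\le i<l$, because along a line the disk with diameter a sub-segment is contained in the disk with diameter the whole segment and the small nudge is absorbed by the slack. Hence $m-1$ such witnesses, one per consecutive gap pushed left, already kill all $\binom{m}{2}$ pairs inside $A$; symmetrically, $n-1$ witnesses placed just inside the consecutive disks of $B$ and pushed to the right kill all pairs inside $B$. This uses only $m+n-2$ witnesses.

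The step I expect to be the real obstacle, and the one that deserves a short but careful computation, is checking that these $m+n-2$ monochromatic-killing witnesses never land in a cross disk $D_{a_kb_l}$, i.e.\ that they do not accidentally destroy a bichromatic edge. Here pushing each $A$-witness to the left is exactly what saves us: such a witness sits ``behind'' the whole cluster $A$ as seen from $B$, so its distance to the centre of any $D_{a_kb_l}$ (which is near $(L/2,0)$) is at least $L/2$ plus the size of the nudge, whereas the radius of $D_{a_kb_l}$ exceeds $L/2$ only by a term of order $(m+n)^2\varepsilon^2/L$; taking $L$ large enough forces the witness strictly outside. The symmetric estimate handles the $B$-witnesses, and an $A$-witness is trivially far from every $D_{b_ib_j}$, and vice versa. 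It follows that $\GG^{-}(P,W)$ has precisely the edges of $K_{m,n}$.

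Two routine points finish the argument. The ``just inside / just outside a disk'' placements are the standard perturbation device already used in the proof of Theorem~\ref{thm:trees}, hence legitimate. And since $A$ and $B$ were put on lines, $P\cup W$ is not in general position; but every condition above --- the containments $w\in D_{a_ka_l}$ and the exclusions $w\notin D_{a_kb_l}$ --- is open with uniform slack, so an arbitrarily small joint perturbation of $P\cup W$ restores general position without altering the graph. (For $m=1$ or $n=1$ there are simply no witnesses on that side, and the construction still works.)
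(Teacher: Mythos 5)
Your construction is correct and is essentially the paper's: both realizations place the two colour classes on two short parallel segments whose mutual distance is large compared with their lengths, and tuck the witnesses just behind each segment (on the side facing away from the other class) so that they pierce every same-class diametral disk while missing every cross disk. The only differences are in bookkeeping---the paper confines all cross disks to an explicit region $D\cup S$ and derives an inequality guaranteeing that the same-class disks protrude from it, whereas you verify the exclusions by a direct distance estimate and add the (correct) observation that nesting of diametral disks along a line lets $m+n-2$ witnesses suffice.
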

\begin{proof}
  We construct a drawing of $K_{m,n}$ with $m \geq n$.  To avoid
  trivialities, we assume that $m>1$.  Draw an axis-aligned rectangle
  $acdb$ such that the diametral disk $D=D_{ad}=D_{cb}$ is a unit
  disk, i.e., $\Pabs{ad}=\Pabs{cb}=1$.  Let $p$ be any point on the segment $ab$
  and let $q$ be any point on the segment $cd$ (see
  Figure~\ref{GGKmn}(left)).  Let $S$ be the horizontal strip bounded
  by the lines $ab$ and $cd$.  The diametral disk $D_{pq}$ (dashed in
  the figure) is interior to $S$ but for two circular segments that are
  contained in the circular segments determined by the chord $ab$ and
  the chord $cd$ on $D$, and its center lies on the line parallel to
  $ab$ and $cd$ through the center of the rectangle. Therefore, if we
  place no witness in $D\cup S$, then $p$ and $q$ would 
  necessarily be adjacent in a witness Gabriel graph that 
  included $p$ and $q$ as vertices.

  Now, put $m$ points $a=p_1,p_2,...,p_m=b$ equally spaced on $ab$,
  and $n$ points $c=q_1,p_2,...,q_n=d$ equally spaced on $cd$, and let
  $P$ be the set of these $m+n$ points.

Consider now a disk $D'$ having as diameter two consecutive points
on a horizontal edge of the rectangle, say $p_i$ and $p_{i+1}$ on
segment $ab$. If the radius of $D'$ is bigger than the height of
the circular segment defined by the chord $ab$ on $D$, then $D'$
will stick out of the disk $D$ (see Figure~\ref{GGKmn}(right)).

The radius of $D'$ is $\Pabs{ab}/2(m-1)$, hence the preceding condition
translates to  $\Pabs{ab}/(m-1)> \Pabs{cb} - \Pabs{ac}$. Equivalently, if we
denote $x=\Pabs{ab}$,  taking into account that $\Pabs{cb}=1$, the condition
becomes
\begin{equation}
\frac{1}{m-1}>\frac{1 -\sqrt{1-x^2}}{x}. \label{eqno1}
\end{equation}
Since the right-hand side of Eq.~\eqref{eqno1} tends to zero as $x$
tends to zero, no matter how large $m$ is, we can always select a
value of $x$ such that condition~\eqref{eqno1} is satisfied. Therefore
taking this value of $x$ all the disks having as diametral pair two of
the $p_i$'s (or two of the $q_j$'s), whether consecutive or not, will
stick out of the region $D\cup S$ and can be pierced by a set of
witnesses $W$, all of them outside the region $D\cup S$.  
Therefore the witness Gabriel graph $GG^-(P,W)$ is isomorphic to $K_{m,n}$, and
the claim is proved.
Note that an infinitesimal perturbation of the points in $P\cup W$ would remove
collinearities and still $GG^-(P,W)\simeq K_{m,n}$.
\end{proof}

\begin{figure}[htbp!]
  \centering
  \input{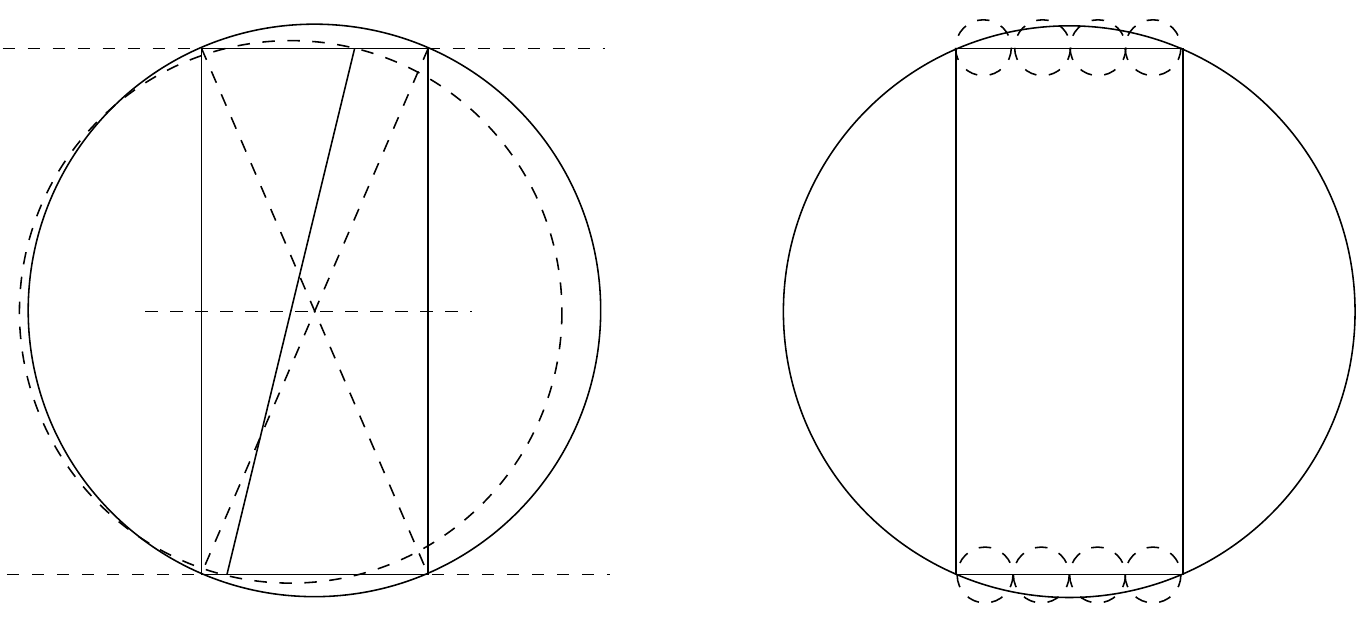_t}
  \caption{Left: Making sure that no witness eliminates any edge $pq$
    for $p \in ab$ and $q \in cd$.  Right:~Ensuring that small disks
    $D'$ are not completely covered by the large disk $D$.}
  \label{GGKmn}
\end{figure}

In the following, we observe some properties of witness Gabriel
drawings, before deducing that some complete $k$-partite graphs, $k >
2$, have no such drawings.

\begin{lemma}
  \label{wedge}
  In a witness Gabriel graph $GG^-(P,W)$, for any pair of incident
  edges $ab$ and $bc$, all the points $p \in P$ in the triangle
  $\triangle abc$ are connected to the common vertex $b$, and there is
  no witness in $\triangle abc$ (except possibly for the vertices $a,
  b, c$, if they belong to $P\cap W$).
\end{lemma}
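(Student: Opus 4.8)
The plan is to derive both statements from one geometric fact: \emph{for every point $p$ in the closed triangle $\triangle abc$ with $p\neq b$, the diametral disk $D_{pb}$ is contained in $D_{ab}\cup D_{bc}$.} Since $ab$ and $bc$ are edges of $GG^-(P,W)$, the disk $D_{ab}$ contains no witness other than possibly $a$ or $b$, and $D_{bc}$ contains no witness other than possibly $b$ or $c$; hence any witness lying in $D_{ab}\cup D_{bc}$ must be one of $a$, $b$, $c$. This single observation, together with the containment above, will yield the lemma.

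For the containment I would argue algebraically, using that a point $z$ lies in the closed disk $D_{xy}$ precisely when $(z-x)\cdot(z-y)\le 0$. By general position $a,b,c$ are not collinear, so write $p$ in barycentric coordinates, $p=\alpha a+\beta b+\gamma c$ with $\alpha,\beta,\gamma\ge 0$ and $\alpha+\beta+\gamma=1$. Then $z-p=\alpha(z-a)+\beta(z-b)+\gamma(z-c)$, so
\[
 (z-p)\cdot(z-b)=\alpha\,(z-a)\cdot(z-b)+\beta\,|z-b|^{2}+\gamma\,(z-c)\cdot(z-b).
\]
If $z$ lies strictly outside both $D_{ab}$ and $D_{bc}$, the first and third summands are positive and the middle one is nonnegative, so the sum is $\ge 0$; it cannot be $0$, because that would force $\alpha=\gamma=0$ and $z=b$, hence $p=b$, contrary to assumption. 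Thus $z\notin D_{pb}$, which is exactly $D_{pb}\subseteq D_{ab}\cup D_{bc}$. Applying this with $z=p$, an endpoint of the diameter and therefore in $D_{pb}$, also gives $p\in D_{ab}\cup D_{bc}$, so $\triangle abc\subseteq D_{ab}\cup D_{bc}$.

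The non-existence of witnesses in $\triangle abc$ now follows at once: a witness $q$ in $\triangle abc$ lies in $D_{ab}\cup D_{bc}$, hence $q\in\{a,b,c\}$, and being a witness it can be such a vertex only if that vertex belongs to $P\cap W$. For the adjacency claim, take $p\in P\cap\triangle abc$; if $p\in\{a,b,c\}$ then $pb$ is trivially an edge, so assume not, and suppose for contradiction that $pb$ is not an edge. Then some witness $w$ lies in $D_{pb}\setminus\{p,b\}\subseteq D_{ab}\cup D_{bc}$; as $w$ is a witness with $w\neq b$, we get $w\in\{a,c\}$. Say $w=a$. Then $a$ is a witness, and since $a\neq b,c$ and $bc$ is an edge, $a\notin D_{bc}$, i.e.\ $\angle bac<90^{\circ}$. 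But $p\in\triangle abc$ forces the ray $ap$ to lie in the wedge $\angle bac$, so $\angle pab\le\angle bac<90^{\circ}$, contradicting $a\in D_{pb}$, which requires $\angle pab\ge 90^{\circ}$. The case $w=c$ is symmetric, so $pb$ must be an edge.

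The main obstacle is proving the containment $D_{pb}\subseteq D_{ab}\cup D_{bc}$. It is worth noting that the tempting shortcut ``$p\in D_{ab}$ implies $D_{pb}\subseteq D_{ab}$'' is false, and that crudely bounding the farthest point of $D_{pb}$ via the triangle inequality on centers and radii is too weak; the barycentric identity above is what makes the argument go through. Everything after that is routine, and general position is used only to ensure that $\triangle abc$ is a nondegenerate triangle.
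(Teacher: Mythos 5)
Your proof is correct, and while it rests on the same two pillars as the paper's argument --- the containment $D_{pb}\subseteq D_{ab}\cup D_{bc}$ for $p\in\triangle abc$ and the covering $\triangle abc\subseteq D_{ab}\cup D_{bc}$ --- you establish the key containment by a genuinely different route. The paper argues synthetically in two steps: it drops the foot $d$ of the perpendicular from $b$ to $ac$, observes that $D_{ab}$, $D_{bc}$, and every $D_{be}$ with $e\in ac$ all pass through $b$ and $d$ (whence $D_{be}\subseteq D_{ab}\cup D_{bc}$), and then handles an interior point $f$ by extending $bf$ until it meets $ac$ at a point $e$ and using $D_{bf}\subseteq D_{be}$. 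Your barycentric identity $(z-p)\cdot(z-b)=\alpha\,(z-a)\cdot(z-b)+\beta\,|z-b|^{2}+\gamma\,(z-c)\cdot(z-b)$ proves the same containment in one stroke, uniformly for every point of the closed triangle and with no auxiliary construction; it is a clean, self-contained alternative, and your side remark that ``$p\in D_{ab}$ implies $D_{pb}\subseteq D_{ab}$'' fails is well taken (that containment holds only for $p$ on the segment $ab$). You are also more careful than the paper on one point: the published proof simply asserts that $D_{ab}$ and $D_{bc}$ ``don't contain any witnesses,'' silently passing over the possibility that $a$, $b$, or $c$ itself belongs to $W$, which the lemma statement explicitly permits. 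Your closing angle argument --- if the witness killing $pb$ were $a$, then $a\notin D_{bc}$ forces $\measuredangle bac<90^\circ$, hence $\measuredangle pab\le\measuredangle bac<90^\circ$ and $a\notin D_{pb}$ --- closes exactly that gap. The only cost of your approach is that it is computational rather than pictorial; the paper's version is easier to read off a figure, but yours is the more airtight of the two.
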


\begin{proof}
Let $d$ be the foot of the perpendicular from $b$ to $ac$ (see Figure~\ref{GGEdgesIncidents1+2}).
\begin{figure}
  \centering
  \input{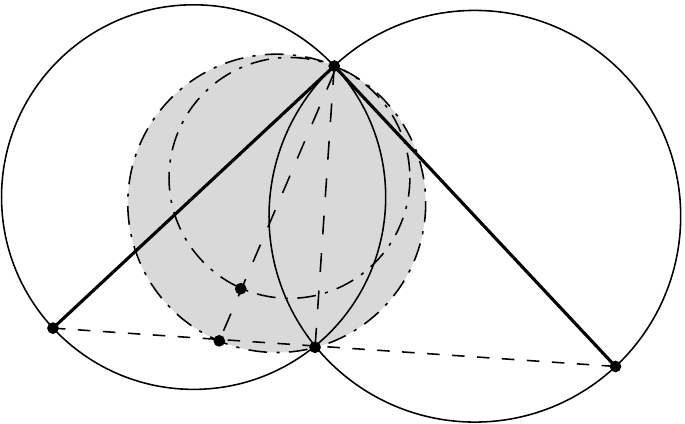_t}%
  \hspace{0pt plus 1fil}%
  \input{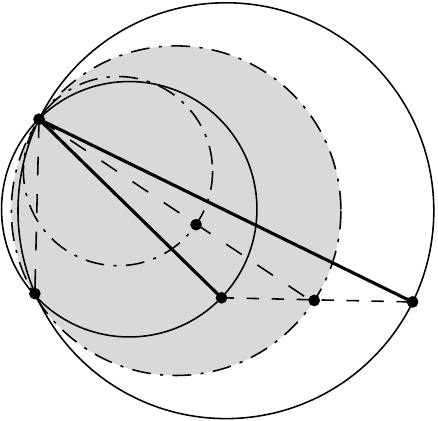_t}
  \caption{$D_{be}$ is included in $ D_{ab} \cup D_{bc}$.}
  \label{GGEdgesIncidents1+2}
\end{figure}
This point is at the intersection of the boundaries of the disks
$D_{ab}$ and $D_{bc}$ of diameters $ab$ and $bc$, respectively.  For
any edge $be$ with $e \in ac$, $\measuredangle bde = \measuredangle bda =
\measuredangle bdc = 90^\circ$.  For any $e \in ac$, $D_{be} \subset D_{ab}
\cup D_{bc}$.  As $D_{ab}$ and $D_{bc}$ don't contain any witnesses,
neither does $D_{be}$.

Consider now a point $f$ inside the triangle $\triangle abc$.
Extend the segment $bf$ until it meets $ac$ at a point $e$.
Now $D_{bf} \subseteq D_{be} \subseteq D_{ba} \cup D_{bc}$.
Therefore $D_{bf}$ is empty of witnesses and $bf$ is an edge of $GG^-(P,W)$.
This proves the first part of the Lemma.

For the second part, we use the fact that the two disks $D_{ab}$ and $D_{bc}$ with diameters $ab$ and $bc$ cover the interior of the triangle $\triangle abc$; therefore, there is no witness in $\triangle abc$ as any such witness inside the triangle would remove one or both edges $ab$ and $cd$.
\end{proof}

\begin{proposition}
If a witness Gabriel graph has as a subgraph a triangle $\triangle a b c$, the vertices $a$, $b$, $c$ and the vertices inside this triangle form a complete subgraph (see Figure~\ref{GGTriangle}).
\begin{figure}
  \centering
  \input{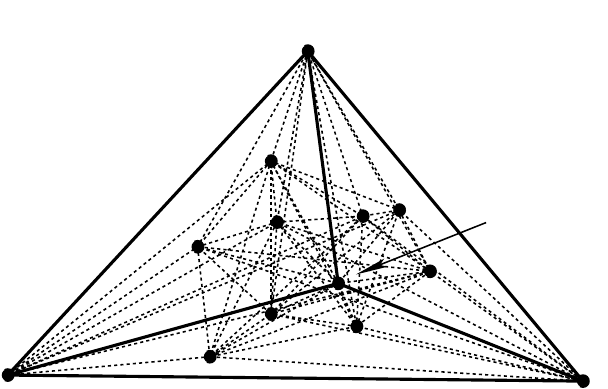_t}
  \caption{All the vertices inside the triangle $\triangle abc$ will
    be connected to $d$.}
  \label{GGTriangle}
\end{figure}
\end{proposition}

\begin{proof}
Considering the three pairs of edges $ab$ and $bc$, $bc$ and $ca$, and $ca$ and $ab$, by Lemma~\ref{wedge}, all the vertices inside the triangle are connected to $a$, $b$, and $c$.

To complete the proof, consider a vertex $d$ inside the triangle (see
Figure~\ref{GGTriangle}).  We have already seen that it will be
connected to the vertices $a$, $b$ and $c$.  The edges $ad$, $bd$ and
$cd$ together with the triangle edges define three new triangles
$\triangle abd$, $\triangle acd$ and $\triangle bcd$.  All the other
points inside the triangle $\triangle abc$ are inside one of these
three triangles.  Therefore they will be connected to $d$, by another
application of Lemma~\ref{wedge}.
\end{proof}

We denote by $K_{i,i,i}$ the $i \times i \times i$ complete 3-partite
graph; we associate a color to each part; similarly, $K_{i,i,i,i}$
denotes the $i \times i \times i \times i$ complete 4-partite graph.
In the following, we consider only complete $k$-partite graphs with $k
\geq 3$ and at least two vertices of each color, as we have seen that
$K_{m,n}$ can be drawn as a witness Gabriel graph for any $m$ and $n$.


\begin{lemma}
  \label{2 vertices}
  In every witness Gabriel drawing of a complete $k$-partite graph,
  there are at least two vertices of each color on the convex hull $H$
  of the set of vertices.
\end{lemma}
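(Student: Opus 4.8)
The plan is to argue by contradiction. Suppose some color class $C$ has at most one vertex on $H=\CH(P)$; since $|C|\ge 2$ there is a color-$C$ vertex $v_1\in\mathrm{int}(H)$, together with a second color-$C$ vertex $v_2\ne v_1$. (By a harmless perturbation of the witnesses we may also assume $W\cap P=\varnothing$.)

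The heart of the argument is to show that $v_1$ lies \emph{outside} $\CH(P\setminus C)$, the convex hull of the vertices of all the other colors. Since $v_1v_2$ is a non-edge, some witness $w\in D_{v_1v_2}\setminus\{v_1,v_2\}$, i.e.\ $\angle v_1wv_2\ge 90^\circ$. Every vertex $u$ of color $\neq C$ is adjacent to both $v_1$ and $v_2$, so $w\notin D_{v_1u}$ and $w\notin D_{v_2u}$, that is $\angle v_1wu<90^\circ$ and $\angle v_2wu<90^\circ$. Reading each inequality as ``$u$ lies on the $v_i$-side of the line through $w$ perpendicular to $wv_i$'', we see that $P\setminus C$ is contained in the intersection of those two open half-planes, a convex wedge $K$ with apex $w$ and opening angle $180^\circ-\angle v_1wv_2\le 90^\circ<180^\circ$; and because $\angle v_1wv_2\ge 90^\circ$, the ray $wv_1$ is not in $K$, so $v_1\notin K\supseteq\CH(P\setminus C)$.

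Now exploit that $v_1$ is interior to $H$ yet outside $\CH(P\setminus C)$. Choose a line $\ell$ strictly separating $v_1$ from $P\setminus C$ and disjoint from $P$, and let $\ell^+$ be the side containing $v_1$. As $v_1\in\mathrm{int}(H)$, $\partial H$ meets $\ell$ in a chord $[x_1,x_2]$ and the arc of $\partial H$ on the $\ell^+$ side carries at least one vertex of $H$; every such vertex lies in $\ell^+\cap P\subseteq C$, hence is a color-$C$ hull vertex. By hypothesis $C$ has exactly one hull vertex $h$, and $h\in\ell^+$; thus the $\ell^+$-arc of $\partial H$ is precisely $x_1\to h\to x_2$, its two edges lying along the hull edges $ha$, $hb$ joining $h$ to its hull-neighbours $a,b$. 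Consequently $H\cap\overline{\ell^+}=\triangle x_1hx_2\subseteq\triangle ahb$, so in particular $v_1\in\triangle ahb$. Neither $a$ nor $b$ has color $C$ (each would otherwise be a further color-$C$ hull vertex), so $ha$ and $hb$ are edges of $\GG^{-}(P,W)$ incident at $h$; by Lemma~\ref{wedge}, every vertex of $P$ inside $\triangle ahb$ is adjacent to $h$. Hence $v_1$ is adjacent to $h$ — impossible, since $v_1$ and $h$ are distinct vertices of the same color $C$.

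The delicate step is the second paragraph: it is the only place where the witnesses enter, and the (mildly counterintuitive) fact that an interior vertex of one color must escape the convex hull of the remaining colors is exactly what makes the combinatorial finish go through. In a careful write-up I would check the degenerate possibilities there — $w$ lying on the circle $\partial D_{v_1v_2}$, so $\angle v_1wv_2=90^\circ$ exactly, which still leaves $v_1$ outside the \emph{open} wedge $K$ — and, in the last paragraph, that $\ell$ can be taken disjoint from $P$ and that $v_1\in\triangle ahb$ with $v_1\neq h$ (immediate, as $v_1$ is interior while $h\in\partial H$). The case ``$C$ has no hull vertex at all'' needs no separate treatment: the construction of $h$ above already produces a color-$C$ hull vertex, itself a contradiction.
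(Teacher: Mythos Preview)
Your argument is correct, but it takes a considerably longer path than the paper's. The paper's proof is a two-case direct application of Lemma~\ref{wedge} and never looks at a single witness. If no color-$C$ vertex lies on $H$, pick any $b_1\in C$ (necessarily interior) and draw the fan of edges from $b_1$ to \emph{all} hull vertices; every such edge is bichromatic, hence present in the drawing, and the fan triangulates $H$ with apex $b_1$. A second $b_2\in C$ lands in one of these triangles, so Lemma~\ref{wedge} forces the monochromatic edge $b_1b_2$. If exactly one $b_1\in C$ lies on $H$, fan out from $b_1$ to the remaining hull vertices (again all bichromatic), triangulate, and trap the interior $b_2\in C$ the same way.

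Your route---using the witness $w$ killing $v_1v_2$ to build the $\le 90^\circ$ wedge $K$, concluding $v_1\notin\CH(P\setminus C)$, then separating by a line $\ell$ and finishing with the hull triangle $\triangle ahb$---is sound and has the mild virtue of handling both cases uniformly; the intermediate fact that an interior vertex of color $C$ must escape $\CH(P\setminus C)$ is a nice observation in its own right. But it costs you a geometric digression the paper avoids entirely, and it introduces the ``$W\cap P=\varnothing$'' perturbation, which is less harmless than you suggest: if $w=p\in P$ and $p$ has neighbours in every direction, no infinitesimal displacement of $w$ keeps it outside \emph{all} incident Gabriel disks simultaneously. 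The issue is reparable (you only need $w\neq u$ for $u\in P\setminus C$, and one can argue around a coincidence), but the paper's fan-triangulation sidesteps it completely.
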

\begin{proof}
  First suppose that for some color, say black, there are no black
  vertices on $H$.  Therefore there are at least two black vertices
  $b_1$, $b_2$ inside $H$.  Draw all the edges between $b_1$ and the
  vertices of the hull (see Figure~\ref{GG1vOnCH}(left)).
  \begin{figure}
    \centering
    \input{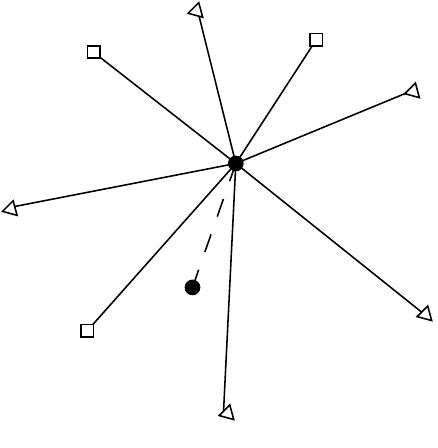_t}%
    \hspace{0pt plus 1fil}%
    \input{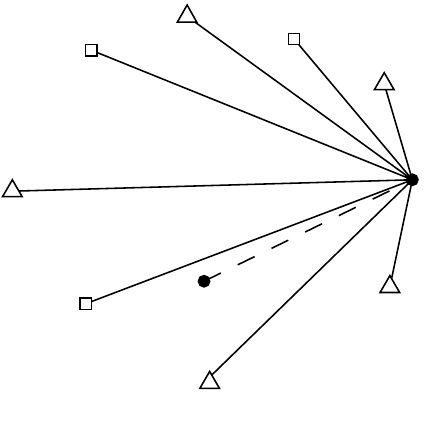_t}
    \caption{Left: No black vertex on $H$. Right:~One black vertex on
      $H$.}
    \label{GG1vOnCH}
  \end{figure}
  These are edges of the graph, so they must be in the witness Gabriel
  drawing.  As $b_2$ is inside $H$, it is in a triangle $\triangle b_1
  m g$ defined by two edges incident to $b_1$.  By Lemma~\ref{wedge},
  $b_2$ is adjacent to $b_1$, a contradiction.

  Now suppose there is exactly one black vertex $b_1$ on $H$.
  Therefore there is at least one other black vertex $b_2$ inside $H$.
  Draw all edges between $b_1$ and other vertices of $H$ (see
  Figure~\ref{GG1vOnCH}(right)).  These edges have to be in the witness
  Gabriel drawing.  As $b_2$ is inside $H$, it is in the triangle
  defined by two edges incident to $b_1$.  By Lemma~\ref{wedge},
  $b_2$ is adjacent to $b_1$, a contradiction.
\end{proof}

\begin{lemma}
\label{GGK222} It is impossible, in a witness Gabriel drawing of
a complete $k$-partite graph, for a line containing two vertices
of one color, to divide the plane in two half-planes, each of them
containing two points of a second and third color respectively
(see Figure~\ref{fig:GGK222}(left)).  The second and the third colors may be
the same.
\begin{figure}
  \centering
  \input{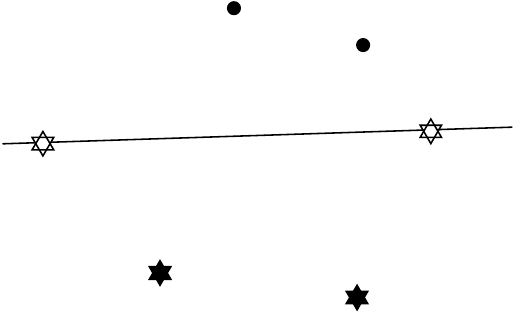_t}%
  \hspace{0pt plus 1fil}%
  \input{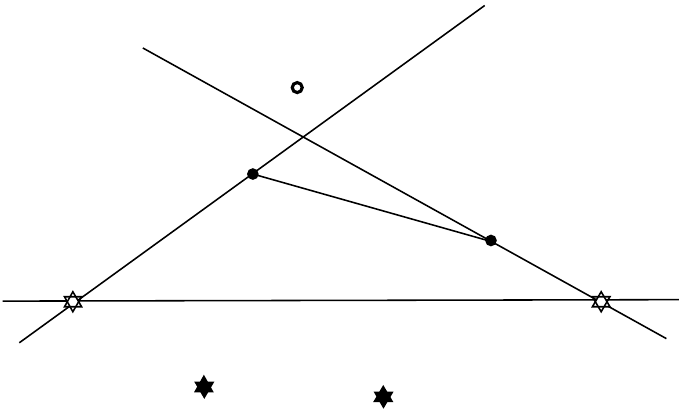_t}
  \caption{Left: The line containing the white-star points separates two
    black-dot points from two black-star points.
    Right:~Illustration to the proof of Lemma~\ref{GGK222}.}
  \label{fig:GGK222}
\end{figure}
\end{lemma}

\begin{proof}
  Consider two white-star vertices $v_{b1}$, $v_{b2}$, two black-point
  vertices $v_{m1}$, $v_{m2}$, and two black-star vertices $v_{g1}$,
  $v_{g2}$.  We will show that it is not possible to place the
  witness(es) to remove the white-star monochromatic edge, the
  black-point monochromatic edge, and the black-star monochromatic
  edge without removing a bi-chromatic edge.  Let the witnesses
  removing the edges $v_{g1} v_{g2}$, $v_{b1} v_{b2}$, and $v_{m1}
  v_{m2}$ be $w_g$, $w_b$, and $w_m$, respectively.  They need not be
  distinct.

  Without loss of generality, assume $w_b$ lies above the line $v_{b1}
  v_{b2}$ (see Figure~\ref{fig:GGK222}(right)), i.e., on the side
  containing $v_{m1}$ and $v_{m2}$.  Note that the vertices $v_{b1}$,
  $v_{b2}$, $v_{m1}$, $v_{m2}$ must be in convex position, as
  otherwise their convex hull would be a triangle, for example
  $v_{b1}v_{b2}v_{m1}$ with $v_{m2}$ in its interior, implying by
  Lemma~\ref{wedge} $v_{m1}$ and $v_{m2}$ be adjacent.  Therefore, the
  four vertices $v_{b1}$, $v_{m1}$, $v_{m2}$, and $v_{b2}$ are in
  convex position and, without loss of generality, in this order
  around their convex hull.  Moreover, applying Lemma~\ref{wedge} to
  the bi-chromatic edges $v_{b1} v_{m1}$ and $v_{b1} v_{m2}$ (and to
  $v_{m2} v_{b1}$, $v_{m2} v_{b2}$), we conclude that there is no
  witness in the convex hull of the four vertices.

  The witness $w_b$ cannot lie to the left of the line $v_{b2} v_{m2}$
  as it would remove the edge $v_{b1} v_{m2}$.  Indeed, as
  $\measuredangle v_{b1} w_b v_{b2} \geq 90^\circ$, for $w_b$ to the
  left of $v_{b1}v_{m2}$, $\measuredangle v_{b1} w_b v_{m2} >
  \measuredangle v_{b1} w_b v_{b2} \geq 90^\circ$, and the edge
  $v_{b1} v_{m2}$ would be removed.  Symmetrically, $w_b$ cannot lie
  to the right of the line $v_{b1} v_{m1}$ as it would remove the edge
  $v_{b2} v_{m1}$.  Therefore, the lines $v_{b1} v_{m1}$ and $v_{b2}
  v_{m2}$ must cross and $w_b$ lies in region $A$ to the right of the
  line $v_{b2} v_{m2}$ and to the left of the line $v_{b1} v_{m1}$
  (see Figure~\ref{fig:GGK222}(right)).

  Now consider the four regions $A$, $B$, $C$, $D$, external to the
  quadrilateral $v_{b1}v_{b2}v_{m2}v_{m1}$, as in
  Figure~\ref{fig:GGK222}(right).  Witness $w_m$ cannot be in region
  $C$ or $D$---irrespective of whether it lies above or below the line
  $v_{m1}v_{m2}$, its presence in $C$ or $D$ would eliminate at least
  one of the edges $v_{b1} v_{m2}$, $v_{b2} v_{m1}$. Indeed,
  $\measuredangle v_{m1} w_m v_{m2} < \max\{ \measuredangle v_{b1} w_m
  v_{m2}, \measuredangle v_{m1} w_m v_{b2} \}$ for all placements of
  $w_m$ in $C$ or $D$.  Witness $w_m$ can't be in region $A$ either as
  it would remove both edges $v_{b1} v_{m2}$ and $v_{b2} v_{m1}$.
  Indeed, $\measuredangle v_{m1} w_m v_{m2}$ is smaller than
  $\measuredangle v_{b1} w_m v_{m2}$ and $\measuredangle v_{m1} w_m
  v_{b2}$.  Therefore $w_m$ must be in region $B \subset \triangle
  v_{b1} w_b v_{b2}$, implying $\measuredangle v_{b1} w_m v_{b2} >
  \measuredangle v_{b1} w_b v_{b2} \geq 90^\circ$.
  But then $w_m$ eliminates both $v_{b1} v_{m2}$ and $v_{b2} v_{m1}$,
  a contradiction.
\end{proof}

\begin{lemma}
  \label{bi-chromatic triangulation}
  Consider a set of points, colored by two or more colors, in convex
  position, such that there are no two consecutive black points.
  There is a triangulation of this set of points such that every
  triangle has exactly one black vertex.
\end{lemma}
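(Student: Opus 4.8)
The plan is to prove the lemma by induction on the number $n$ of points. Since only the color \emph{black} is distinguished, I will treat all the other colors uniformly as \emph{non-black}, and I assume there is at least one black point (otherwise no triangulation can have the stated property; this is the only case of interest, and it holds in all our intended applications, where every color class has at least two vertices). Because the points are in convex position, ``a triangulation'' here means a triangulation of the convex polygon they span, so every chord is an admissible diagonal, every fan from a vertex is an admissible triangulation, and every ``ear'' may be cut off. The invariant I want to preserve along the induction is exactly the hypothesis of the lemma: the point set is in convex position, has at least one black point, and has no two consecutive black points.

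For the base case $n=3$ (the only case with $n\le 3$, since otherwise there is nothing to triangulate), the no-two-consecutive-black hypothesis together with the existence of a black point forces exactly one black vertex, and the single triangle has exactly one black vertex, as required. For $n\ge 4$, let $b$ be the number of black points. If $b=1$, take the fan from the unique black vertex $\beta$: writing the vertices cyclically as $\beta,p_1,\dots,p_{n-1}$, each triangle $\triangle \beta p_i p_{i+1}$ ($1\le i\le n-2$) has $\beta$ as its only black vertex. If $b\ge 2$, pick any black vertex $v$ with hull-neighbors $u$ and $w$; by hypothesis $u$ and $w$ are non-black, so the ear $\triangle uvw$ has exactly one black vertex. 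Removing $v$ leaves $n-1$ points that are still in convex position, still contain a black vertex (in fact $b-1\ge 1$ of them), and still have no two consecutive black points, because the only newly created adjacency is $u\sim w$ and both of $u,w$ are non-black. The induction hypothesis then supplies a triangulation of the smaller polygon in which every triangle has exactly one black vertex, and adjoining the ear $\triangle uvw$ yields the desired triangulation of the original polygon.

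The argument is short, and essentially its whole content is the observation that cutting an ear \emph{at a black vertex} is always legitimate once $b\ge 2$. Consequently, the one point requiring care — and the place where a naive ``cut any ear'' argument fails, since an arbitrary ear can have zero or two black vertices — is maintaining the two invariants (at least one black vertex, no two consecutive black) across the removal step; this is immediate for a black ear, and the degenerate situation $b=1$, in which no black vertex may be removed without destroying the invariant, is disposed of directly by the fan. As an optional sanity check, one can note that this procedure peels off all black vertices but one (each joined only to its two then-current hull-neighbors) and fans from the last, producing exactly $n-2$ triangles.
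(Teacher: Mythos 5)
Your proof is correct. Both you and the paper build the triangulation by repeatedly cutting off an ear that contains exactly one black vertex, but the organization differs in a way worth noting. The paper's procedure cuts \emph{any} ear $\triangle abc$ with exactly one black vertex and removes the middle point $b$, which may be black or non-black; as a result it must analyze the state in which the greedy loop gets stuck (more than two points remain, none of them black) and repair it by reinstating the last removed point and fanning from it. Your version always removes a \emph{black} vertex as long as at least two remain --- its hull-neighbors are forced to be non-black, so the ear is valid and both invariants (at least one black point, no two consecutive black) are preserved --- and fans from the unique black vertex once only one is left. This turns the argument into a clean induction with no stuck state and no backtracking, and it makes the termination analysis trivial. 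The one caveat you flag, that the statement is vacuous (indeed false) when no black point is present, is a defect of the lemma as stated rather than of either proof; the paper's sole application (Lemma~\ref{max subset}) explicitly assumes at least one black point, exactly as you do.
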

\begin{proof}
Find three consecutive points $a$, $b$, $c$, on the convex hull of the set of points such that exactly one of them is black.
Add the triangle $\triangle abc$ to the triangulation and remove $b$ from the set of points.
Repeat this procedure until no three consecutive points $a$, $b$, $c$ with exactly one black are found.
At this moment, there must be either only two points left (in which case we have constructed the desired triangulation) or more than two points but no black.

Indeed, notice that if there were more than two points left, with at least one black one, this black point would have as neighbors points of a different color, and we could repeat the procedure described above at least once more.
This is true as during the procedure above in which we add a triangle to the triangulation and remove one point, two cases may occur.
In the first case, one black point is removed, in which case the neighborhood of all other black points does not change (as the black point removed had colored points as neighbors), and the original condition that there are no two consecutive black points is maintained.
In the other case, one point $b$ of a different color than black is removed and the triangle $abc$ is added to the triangulation.
Either $a$ or $c$ might be black, but not both as all triangles are
incident to exactly one black point. Suppose without loss of
generality that $a$ is black. Then $a$ gets as a neighbor, instead of
$b$, a new point $c$ of a different color than black, and once again
the original condition of no two consecutive black points is maintained.

If there are more than two points left, none black, remove the last triangle added and put back the point $b$ that was removed last; $b$ must be black.
If $b$, $c_1$, $c_2$, $\ldots$, $c_{m}$ are the remaining points, in order, add the triangles $\triangle b c_1 c_2$, $\triangle b c_2 c_3$, $\ldots$, $\triangle b c_{m-1} c_{m}$.
The set of points is triangulated such that each triangle contains exactly one black vertex.
\end{proof}

\begin{lemma}
\label{max subset}
Consider the convex hull $H$ of the vertices of a witness Gabriel
drawing of a complete $k$-partite graph.  Consider a subset of
vertices of $H$, with at least one black point and no two consecutive
black points.  The interior of the convex hull of this subset is empty
of black points.
\end{lemma}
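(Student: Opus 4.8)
The plan is to derive the statement from Lemma~\ref{bi-chromatic triangulation} and Lemma~\ref{wedge}. Write $S$ for the chosen subset of vertices of $H$. If $\Pabs{S}\le 2$, then $\mathrm{conv}(S)$ is a point or a segment, whose interior (in the plane) is empty, and there is nothing to prove; so I would assume $\Pabs{S}\ge 3$. Since $S$ contains a black vertex but no two consecutive black vertices, it must also contain a non-black vertex; hence $S$ is a point set in convex position (being a subset of the vertices of the convex polygon $H$), colored with two or more colors, with no two consecutive black points. Lemma~\ref{bi-chromatic triangulation} then yields a triangulation $\mathcal{T}$ of $S$ in which every triangle has exactly one black vertex.

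Next I would argue by contradiction: suppose some black vertex $d\in P$ lies in the interior of $\mathrm{conv}(S)$. Because $P\cup W$ is in general position, $d$ cannot lie on the segment joining any two vertices of $S$, so in particular it lies on no edge of $\mathcal{T}$; therefore $d$ lies in the open interior of exactly one triangle of $\mathcal{T}$, say $\triangle abc$, whose unique black vertex I call $a$, so that $b$ and $c$ are both non-black.

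Finally, since the drawing realizes a complete $k$-partite graph and $a$ has a color distinct from those of $b$ and of $c$, the segments $ab$ and $ac$ are both edges of $GG^-(P,W)$. Applying Lemma~\ref{wedge} to this pair of edges incident at the common vertex $a$, every point of $P$ lying inside $\triangle abc$ --- in particular $d$ --- must be adjacent to $a$. But $a$ and $d$ are both black, and in a complete $k$-partite graph no two vertices of the same color are adjacent; this contradiction completes the argument.

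The step requiring the most care is the invocation of Lemma~\ref{wedge}: one must ensure that the two edges handed to it are genuine edges of the witness Gabriel graph, which is precisely why the triangle $\triangle abc$ must be entered through its single black vertex $a$ --- the edge $bc$ between the two non-black vertices, which need not be present in the graph, is never used. The only other delicate point is the appeal to general position guaranteeing that $d$ sits strictly in the interior of a triangle of $\mathcal{T}$, rather than on one of its edges or diagonals.
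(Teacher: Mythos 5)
Your argument is correct and is essentially the paper's own proof: triangulate the subset via Lemma~\ref{bi-chromatic triangulation} so that each triangle has a unique black vertex, observe that the two edges of such a triangle incident to its black vertex are bichromatic and hence present in the drawing, and then invoke Lemma~\ref{wedge} to force a forbidden black--black adjacency with any black point inside. The only difference is that you spell out the degenerate cases ($\Pabs{S}\le 2$, and general position placing the offending point strictly inside one triangle), which the paper leaves implicit.
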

\begin{proof}
By Lemma~\ref{bi-chromatic triangulation}, this subset of points can be triangulated such that each triangle has exactly one black vertex.
In every triangle, the two edges incident to the black vertex are in the complete $k$-partite witness Gabriel drawing.
Any black vertex $b_2$ inside the convex hull would be inside a triangle defined by two edges incident to a black vertex $b_1$.
By Lemma~\ref{wedge}, $b_2$ would be incident to $b_1$, a contradiction.
\end{proof}

Consider a finite set of points $P$ colored with $k$ colors.  A
(\emph{quasi-convex circular}) \emph{quasi-ordering} $\quasi$ of $P$
is a partition of $P$ into $s\geq k$ subsets $P_1,\ldots,P_s$,
cyclically ordered as $P_1 \quasi P_2 \quasi \ldots \quasi P_s \quasi
P_1$, such that (a) every $P_i$ contains only points of one color and
consecutive sets have different colors, and (b) any subset $S \subset
P$ with at most two elements from each $P_i$ is in convex position and
their cyclic ordering along $\CH(S)$ is consistent with $\quasi$; we
make no assumption on the relative order of the points coming from the
same set $P_i$ (hence the choice of the term \emph{quasi}-convex
\emph{quasi}-ordering).  Refer to Figure~\ref{fig:disaster}(left).
\begin{figure}
  \centering
  \input{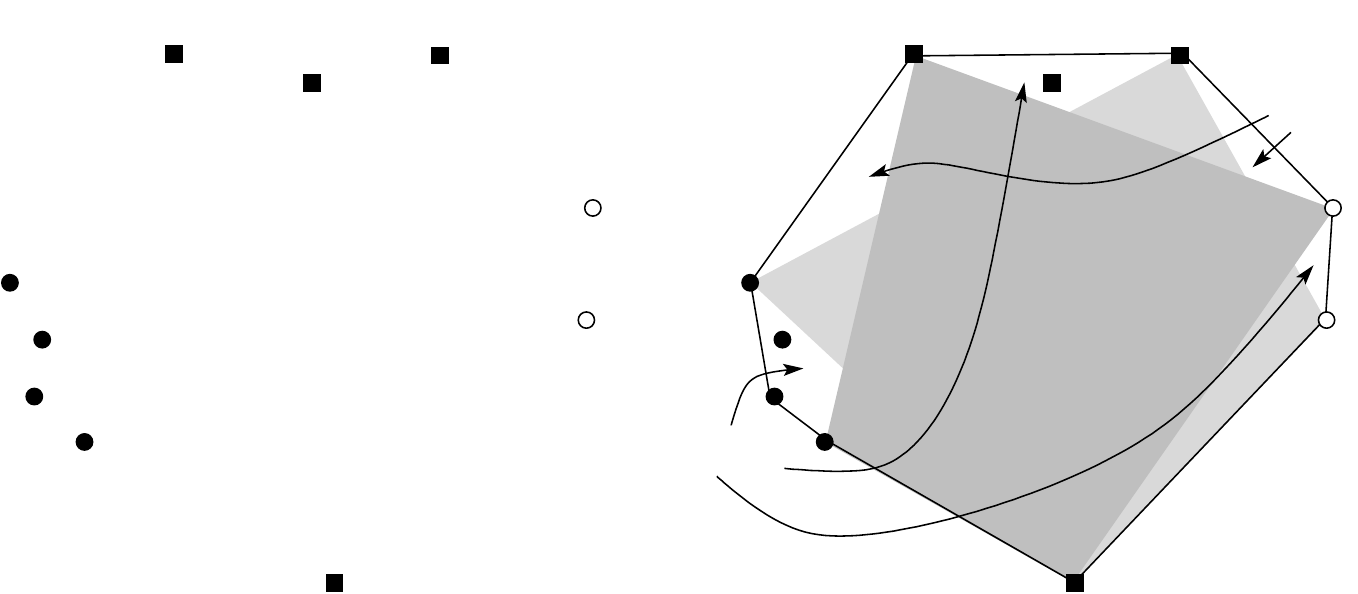_t}
  \caption{Left:~An example of a quasi-ordered set.
    Right:~Quasi-ordering of the vertex set of a witness Gabriel
    drawing of a complete $k$-partite graph.  Sets $K_{\textsc{cw}}$
    and $K_{\textsc{ccw}}$ are shaded.  Pockets and gaps form the
    complement of $K_{\textsc{cw}} \cup K_{\textsc{ccw}}$ in $H$.}
  \label{fig:disaster}
\end{figure}

\begin{lemma}
\label{order}
The set of vertices $P$ of a witness Gabriel drawing of a complete
$k$-partite graph can be quasi-ordered.
\end{lemma}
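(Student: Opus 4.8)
The plan is to read the convex hull $H$ of $P$ cyclically, build the blocks $P_i$ from the maximal monochromatic arcs of $H$, and then fold the (few) interior vertices of $P$ into the appropriate blocks.

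I would first record the coarse structure of $H$. By Lemma~\ref{2 vertices} every color occurs at least twice on $H$, so $H$ has at least $2k$ vertices. Reading $H$ cyclically, merge each maximal run of consecutive hull vertices of one color into a single block; this yields a cyclic sequence $Q_1 \quasi \cdots \quasi Q_t$ in which consecutive blocks have distinct colors and $t \ge k$, since each of the $k$ colors occupies at least one block. For this provisional partition condition~(a) holds by construction, and the restriction of condition~(b) to hull vertices is automatic: any subset of $H$ is in convex position, and its cyclic order along its own convex hull is the one inherited from $H$, which is consistent with the block order because each block is a contiguous arc of $H$.

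The substantive step is to locate the interior vertices of $P$. The engine is Lemma~\ref{wedge}: whenever $uw$ and $wz$ are edges of $\GG^-(P,W)$, the triangle $\triangle uwz$ contains no witness and every vertex inside it is adjacent to $w$. Since in a complete $k$-partite graph two vertices fail to be adjacent exactly when they share a color, this forces same-colored pairs to stay ``apart'' and, more strongly, forces every triangle of the drawing spanned by two bichromatic edges to be empty of vertices of the two colors involved; in particular no interior vertex can sit inside a triangle of the drawing whose vertices realize three distinct colors. Using this together with Lemma~\ref{max subset} (no vertex of a color $c$ lies inside the convex hull of a subset of $H$ with no two consecutive $c$-vertices) and Lemma~\ref{GGK222} (which, applied to two hull vertices of a color $c$, confines all but at most one vertex of every color $\neq c$ to one side of the line through them), I would show that each interior vertex of color $c$ lies in the pocket cut off from $H$ by the chord joining the two hull neighbours of one maximal $c$-block. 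Assigning each interior vertex to the block $Q_i$ of its pocket produces blocks $P_1 \quasi \cdots \quasi P_s$ still satisfying~(a), with $s = t \ge k$.

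It then remains to verify condition~(b) for the enlarged blocks: every $S \subseteq P$ with at most two points from each $P_i$ is in convex position with cyclic order consistent with $\quasi$. Because a point set is in convex position precisely when each of its $4$-point subsets is, it suffices to treat $4$-point subsets, and the only new cases involve an interior vertex $q \in P_i$. Here the pocket description of $q$, together with one further application of Lemma~\ref{wedge} and the emptiness of the relevant triangles, shows that $q$ remains extreme ``towards its pocket'' relative to any point chosen from $P_{i-1}$ or $P_{i+1}$ and to the at most one further point of $S$ lying in $P_i$, so those four points are in convex position and in the order dictated by $\quasi$. The step I expect to be the main obstacle is the previous one---confining the interior vertices of each color to a single pocket, equivalently ruling out every $\le 2$-per-block selection that could fail convex position---since this is exactly where completeness of the $k$-partite graph must be exploited in full (so that \emph{every} bichromatic pair is an edge and Lemma~\ref{wedge} yields an abundance of empty triangles) and where Lemmas~\ref{GGK222} and~\ref{max subset} have to be combined with care; by comparison the block count and condition~(a) are routine.
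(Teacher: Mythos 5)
Your overall strategy coincides with the paper's: read the hull into maximal monochromatic blocks, confine the interior vertices to ``pockets'' attached to the blocks, and verify conditions (a) and (b) via Lemma~\ref{wedge}. However, the proposal has genuine gaps precisely where the paper does its real work. First, a side claim you lean on is false as stated: you assert that no interior vertex can sit inside a triangle of the drawing whose vertices realize three distinct colors. Lemma~\ref{wedge} only forces such an interior vertex to be adjacent to all three corners, which is perfectly consistent with it having a \emph{fourth} color in a complete $k$-partite graph with $k\geq 4$; so trichromatic triangles do not automatically repel interior vertices. Second, and more importantly, the confinement step --- showing that every interior vertex lands in the pocket of a block of its own color and nowhere else --- is exactly the step you flag as ``the main obstacle'' and leave open. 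The paper resolves it with a specific construction you do not supply: it takes $K_{\textsc{cw}}$ and $K_{\textsc{ccw}}$, the convex hulls of the most clockwise (resp.\ most counterclockwise) representative of each block, kills their interiors using Lemma~\ref{max subset}, and then splits the remainder of $H$ into \emph{pockets} (adjacent to one non-trivial block) and \emph{gaps} (adjacent to two consecutive blocks). Gaps are shown empty by taking, for any candidate point $p$ in a gap, a hull vertex $g$ of $p$'s color (guaranteed by Lemma~\ref{2 vertices}) and observing that $p\in\triangle bwg$ forces a monochromatic edge via Lemma~\ref{wedge}; the same trick shows each pocket contains only points of its block's color. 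Your sketch, based on chords cutting off each block and an appeal to Lemma~\ref{GGK222}, never rules out interior vertices sitting between two blocks (the gaps), so the assignment of interior vertices to blocks is not actually established.

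For condition (b), your reduction to $4$-point subsets is legitimate, but the claim that a pocket point ``remains extreme towards its pocket'' is again only asserted. The paper's argument is concrete: for two points $c,c'$ in the same pocket, one shows the line $cc'$ leaves each adjacent pocket entirely on one side, since otherwise two differently-colored points $p,q$ on opposite sides of $cc'$ would force $c\in\triangle pqc'$ or $c'\in\triangle pqc$, and Lemma~\ref{wedge} would then create a forbidden monochromatic edge. Supplying that argument, together with the gap-emptiness argument above, is what is needed to turn your outline into a proof.
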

\begin{proof}
  Consider the convex hull $H\mathop{:=}\CH(P)$ of the vertices. 
  Let $E \subset P$ contain the vertices that appear on $H$.
  Traversing $H$ in counterclockwise order, group consecutive vertices
  of the same color together, forming a cyclically ordered partition
  $E_1, \ldots, E_s$ of $E$.  Since $E$ is in convex position, this
  partition is trivially a quasi-ordering, as define above.  We now
  show how to extend it to the entire set $P$.  Specifically, for each
  point in $P'\mathop{:=}P \setminus E$, we assign it to one of the
  groups $E_i$.
  We say a group $E_i$ is \emph{trivial} if it contain just one
  point.  We will see that no other point is ever added to such a
  group.  

  Let $K_{\textsc{cw}}$ be the convex hull of the set consisting of
  the most \emph{clockwise} point from each $E_i$.  Analogously define
  $K_{\textsc{ccw}}$ for most counterclockwise points.

  Lemma~\ref{max subset} implies that no point of $P'$ is contained
  inside $K_{\textsc{cw}} \cup K_{\textsc{ccw}}$.  The remaining
  points of $P$ lie in $H \setminus (K_{\textsc{cw}} \cup
  K_{\textsc{ccw}})$.  This set naturally splits into at most $2s$
  convex subsets: \emph{pockets} $\Pi_i$ are connected components of
  this set whose boundary contains one (and only one) non-trivial
  group $E_i$, while \emph{gaps} are components adjacent to two
  consecutive groups (refer to Figure~\ref{fig:disaster}(right)).

  We complete our quasi-ordering now: We claim that $P' \subset
  \bigcup \Pi_i$, i.e., there are no points of $P'$ in the gaps.  And
  a point in pocket $\Pi_i$ is simply assigned to $E_i$.

  It remains to prove (i)~the emptiness of gaps and (ii)~the fact that
  the resulting partition of $P$ is a quasi-ordering.  We start
  with~(i), for which it is sufficient to argue that a gap between
  consecutive groups, say, white $E_1$ and black $E_2$ is empty of
  points of $P'$.  Let $w$ and $b$ be the points of $E_1$ and $E_2$,
  respectively, adjacent to the gap.  For a contradiction, consider a
  point $p$ in the gap.  Suppose first that it is not black or white,
  say green.  By Lemma~\ref{2 vertices} a green point $g$ appears on
  $H$.  By construction, the gap lies in $\triangle bwg$ and therefore
  so does $p$, contradicting Lemma~\ref{wedge}, as it forces the
  existence of a green-green edge $pg$.  Thus $p$ must be white
  or black.  We assume it is black, without loss of generality.  Then
  we again take a green point $g$ on $H$, forcing $\triangle bwg$ to
  contain $p$ and ensuring the existence of a black-black edge $bp$,
  by Lemma~\ref{wedge}---a contradiction.  Therefore, indeed, the
  gaps are empty.

  We now argue (ii): the resulting partition of points is indeed a
  quasi-ordering.  We start by proving part (a) of the definition,
  namely that all points in the pocket $\Pi_i$ have the color of $E_i$
  (consecutive groups have different colors by construction).  For a
  contradiction, suppose a, say, blue point $b \in \Pi_i \cap P'$ lies
  in the pocket of purple group $E_i$.  Let $b'$ be a blue point on
  $H$.  By definition of a pocket, $b$ lies in a triangle formed by
  $b'$ and two purple points of $E_i$, once again contradicting
  Lemma~\ref{wedge} and hence~(a) is proved.

  It remains to check that any subset of $P$ formed by taking at most
  two points from each pocket $\Pi_i$ (including $E_i$) is in convex
  position.  If at most one point is used from each group, the
  assertion holds by construction.  To finish the argument, it is
  sufficient to show that, for any two, say, cyan points $c$ and $c'$,
  the line $cc'$ leaves all the other pockets on the same side; indeed
  it is sufficient to prove this for the two pockets adjacent to the
  cyan pocket of $c$ and $c'$ and thus of colors other than cyan.  If
  the line $cc'$ did not have both pockets entirely to one side of it,
  there would be two points $p$ and $q$ coming from these two pockets
  (one from each, or both from the same one) of color other than cyan,
  on opposite sides of the line $cc'$.  Since $c$ and $c'$ lie in the
  same pocket and therefore on the same side of $pq$, this would force
  either $c \in \triangle pqc'$ or $c' \in \triangle pqc$,
  contradicting once again Lemma~\ref{wedge} and thereby completing
  the proof of the Lemma.
\end{proof}

\begin{lemma}
  \label{lem:K2222-convex}
  Any witness Gabriel drawing of $K_{2,2,2,2}$ must have vertices in
  convex position.
\end{lemma}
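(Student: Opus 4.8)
The plan is to obtain this essentially for free from Lemma~\ref{2 vertices}. The graph $K_{2,2,2,2}$ has exactly $8$ vertices, partitioned into four color classes of size $2$. So the first (and main) step is simply to invoke Lemma~\ref{2 vertices}: in any witness Gabriel drawing of a complete $k$-partite graph, at least two vertices of each color lie on the convex hull $H$ of the vertex set. Since each color class of $K_{2,2,2,2}$ contains only two vertices, this forces \emph{both} vertices of every color onto $H$, and hence all eight vertices of $P$ lie on $\partial H$.

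The only technical point to address is the gap between ``lying on $\partial H$'' and ``being a vertex of $H$''. Here the general position hypothesis does the work: since no three points of $P\cup W$ are collinear, no point of $P$ can lie in the relative interior of an edge of $\CH(P)$, as that would put three points of $P$ on a line. Therefore every point of $P$ on $\partial H$ is a vertex of $H$ (and, again by general position, $H$ is a genuine convex polygon, not a segment). Consequently the eight vertices are vertices of their common convex hull, i.e., in convex position, which is exactly the claim.

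I do not expect any real obstacle: the substantive things to verify are only that Lemma~\ref{2 vertices} is applicable here — it requires at least two vertices per color, which holds with equality for $K_{2,2,2,2}$ — and the collinearity technicality above. The reason this small statement is worth isolating is that it is the launching point for the impossibility argument for $K_{2,2,2,2}$: once the eight vertices are known to be in convex position, one can feed this into the quasi-ordering structure of Lemma~\ref{order} and the separation obstruction of Lemma~\ref{GGK222} to force either a monochromatic edge or a witness placement that is ruled out by Lemma~\ref{wedge}. That downstream reasoning, not the present lemma, is where the difficulty lies.
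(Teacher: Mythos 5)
Your proof is correct and is essentially the paper's own argument: the paper's entire proof is the one-line invocation of Lemma~\ref{2 vertices}, which with two vertices per color forces all eight vertices onto the convex hull. Your extra remark about general position ruling out points in the relative interior of a hull edge is a reasonable tidying-up of a detail the paper leaves implicit.
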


\begin{proof}
By Lemma~\ref{2 vertices}, there are at least 2 vertices of each color on the convex hull.
\end{proof}

\begin{lemma}
\label{K2222 w}
In a witness Gabriel drawing of a $K_{2,2,2,2}$, there is no witness inside the convex hull of the set of vertices.
\end{lemma}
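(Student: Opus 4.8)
The plan is to argue by contradiction. Suppose some witness $w$ lies in the interior of $\CH(P)$, where $P$ is the set of eight vertices. By Lemma~\ref{lem:K2222-convex} the vertices are in convex position, so $w$ is not one of them; hence for every edge $xy$ of the drawing we have $w\notin D_{xy}$, which is equivalent to $\angle xwy<90^\circ$. Since in $K_{2,2,2,2}$ every pair of vertices of different colors is an edge, the first thing I would record is: \emph{every bichromatic pair of vertices subtends an angle strictly smaller than $90^\circ$ at $w$}.

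Next I would extract a matching lower bound from the fact that $w$ is interior. Because $w\in\operatorname{int}\CH(P)$, for every vertex $z\in P$ there is another vertex $z'\in P$ with $\angle zwz'\ge 90^\circ$: otherwise every vertex would lie on the same open side as $z$ of the line through $w$ orthogonal to $wz$, so $P$ would fit in an open halfplane missing $w$, contradicting $w\in\CH(P)$. By the previous paragraph such a $z'$ cannot be joined to $z$ by an edge, so $z$ and $z'$ have the same color; since each color class has exactly two vertices, $z'$ must be the color-mate of $z$. Applying this to any vertex $a$ and its color-mate $a'$ yields $\angle awa'\ge 90^\circ$.

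Now I would close the argument using this single monochromatic pair. Place $w$ at the origin and let $\delta=\angle awa'$; by general position $w,a,a'$ are not collinear, so $\delta\in[90^\circ,180^\circ)$. Choose coordinates so that the ray $wa$ points in direction $0^\circ$ and the ray $wa'$ in direction $\delta$. Each of the remaining six vertices $z$ is bichromatic with both $a$ and $a'$, so its direction from $w$ lies in $(-90^\circ,90^\circ)\cap(\delta-90^\circ,\delta+90^\circ)=(\delta-90^\circ,90^\circ)$, which, since $\delta\ge 90^\circ$, is contained in $[0^\circ,\delta]$. Hence all eight vertices point into the closed sector of directions $[0^\circ,\delta]$ from $w$, a sector of total angle $\delta<180^\circ$; that is, $P$ is contained in an open halfplane whose bounding line passes through $w$. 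Therefore $w\notin\CH(P)$, a contradiction, so no witness lies in the interior of the hull; and since general position forbids three collinear points of $P\cup W$, no witness lies on a hull edge either, which finishes the proof.

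The crux is the middle step: turning ``$w$ is interior'' into ``some monochromatic pair spans at least $90^\circ$ at $w$''. This is where both special features of $K_{2,2,2,2}$ are exploited---that bichromatic pairs are forced edges (so their diametral disks avoid $w$) and that the parts have size exactly two (so the far vertex is uniquely the color-mate). After that, trapping $P$ in a halfplane is only angle bookkeeping; the one point requiring a little care is the degenerate case $\delta=90^\circ$, for which the sector $[0^\circ,90^\circ]$ still has total angle below $180^\circ$ and the argument goes through unchanged.
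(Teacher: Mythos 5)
Your proof is correct, but it takes a genuinely different route from the paper's. The paper's argument is a two-step application of earlier machinery: by Lemma~\ref{lem:K2222-convex} the eight vertices are in convex position; take any triangulation of them; since each color class has only two vertices, every triangle has at least two bichromatic sides, which are forced edges sharing a common vertex, so Lemma~\ref{wedge} forbids a witness in that triangle, and the triangles cover the hull. You instead reason directly at a hypothetical interior witness $w$: every bichromatic pair is a forced edge and hence subtends an angle $<90^\circ$ at $w$, while $w\in\operatorname{int}\CH(P)$ forces, for each vertex, some other vertex subtending $\ge 90^\circ$, which must therefore be its unique color-mate; fixing one such pair $a,a'$ then traps all eight vertices in a closed sector of angle $<180^\circ$ at $w$, contradicting $w\in\CH(P)$. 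Both arguments are sound, and both exploit the same two features of $K_{2,2,2,2}$ (bichromatic pairs are edges; parts have size two), just packaged differently: the paper's version is shorter because it reuses Lemma~\ref{wedge} and the triangulation trick that also drives Lemma~\ref{max subset}, whereas yours is self-contained, avoids Lemma~\ref{wedge} entirely, and makes explicit the angle bookkeeping that the paper leaves implicit. Your closing remark about witnesses on hull edges goes slightly beyond what ``inside the convex hull'' strictly requires, but it is correct under the stated general-position assumption and costs nothing.
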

\begin{proof}
By Lemma~\ref{lem:K2222-convex}, all the vertices are in convex position.
Take any triangulation of the set of vertices.
Each triangle will be incident to at most two vertices of the same color; therefore, for each triangle, at least two edges will be present in the witness Gabriel drawing.
By Lemma~\ref{wedge}, there can't be any witness in any of the triangles.
\end{proof}

\begin{lemma}
  \label{$K_{2,2,2,2}$}
  There is no witness Gabriel drawing of $K_{2,2,2,2}$ in which all
  the vertices of the same color are consecutive (see
  Figure~\ref{GG4PartiteN8}(left)).
  \begin{figure}
    \centering
    \input{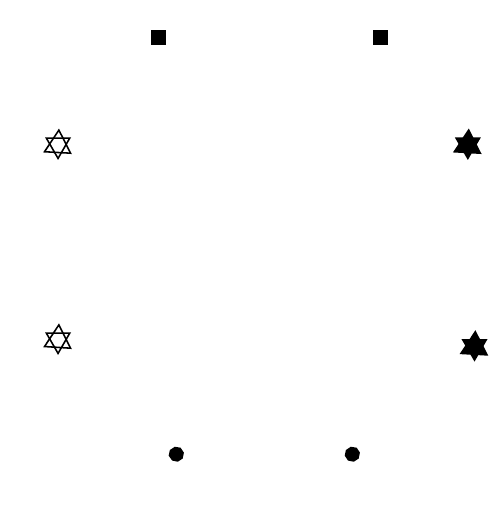_t}%
    \hspace{0pt plus 1fil}%
    \input{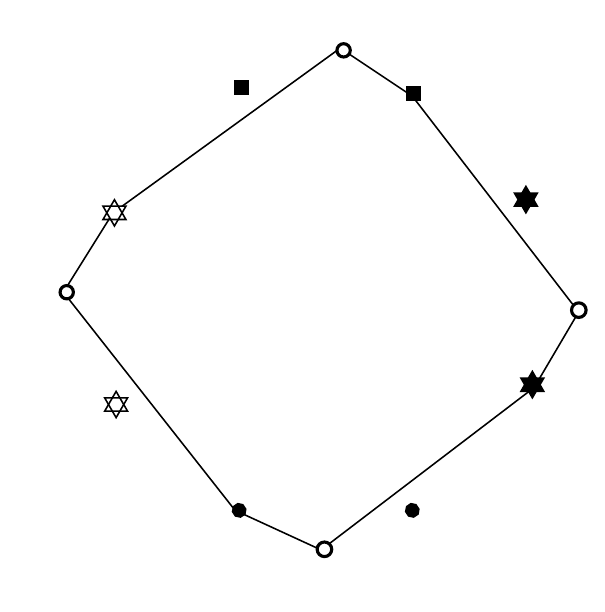_t}
    \caption{Left: A drawing of $K_{2,2,2,2}$ in which all the
      vertices of the same kind are consecutive.  Right:~Octagon in a
      4-partite witness Gabriel drawing.}
    \label{GG4PartiteN8}
  \end{figure}
\end{lemma}

\begin{proof}
  By Lemma~\ref{lem:K2222-convex}, all the vertices are in convex
  position, so the ordering of vertices is well defined.  Name the
  vertices $v_1, \ldots, v_{8}$ clockwise as in
  Figure~\ref{GG4PartiteN8}(right).  The witnesses $w_1,\ldots, w_4$
  eliminate the edges $v_8 v_1$, $v_2 v_3$, $v_4 v_5$, $v_6 v_7$,
  respectively.  By Lemma~\ref{K2222 w}, the witnesses are outside the
  convex hull of all the vertices.  As we will see below, the four
  witnesses are distinct and necessary.  For a contradiction, suppose one
  witness $w=w_2=w_3$ removes two monochromatic edges, say $v_2 v_3$ and
  $v_4 v_5$.  The witness $w$ sees $v_2 v_3$ and $v_4 v_5$,
  respectively, with an angle larger than $90^\circ$,
  i.e. $\measuredangle v_2 w v_3 >90^\circ$ and $\measuredangle v_4 w
  v_5 >90^\circ$.  As we already saw, $w$ is outside $\CH(P)$, and
  therefore sees it with a maximum view angle smaller than
  $180^\circ$.  Hence two cases are possible:
  \begin{enumerate}
  \item $w$ sees the two edges overlapping, and without loss of generality, it sees the vertices from left to right in the following order: $v_2$, $v_4$, $v_3$, $v_5$.
  But then $w$ removes $v_2v_5$, since $\measuredangle v_2 w v_5 >
  \measuredangle v_2w v_3$. 
  \item $w$ sees the two edges nested, and without loss of generality, it sees the vertices from left to right in that order: $v_2$, $v_4$, $v_5$, $v_3$.
  But then $w$ removes $v_2v_5$ and $v_3v_4$, by similar reasoning.
  \end{enumerate}
Hence we can conclude that each witness $w$ removes exactly one
monochromatic edge, and four distinct witnesses are necessary.


Consider the octagon $w_1 v_1 w_2 v_3 w_3 v_5 w_4 v_7$ (see Figure~\ref{GG4PartiteN8}(right)).
The interior angles at $w_i$ measure less than $90^\circ$ each; otherwise a witness would be inside a diametral disk of two vertices of different colors.

The interior angles at $v_1$, $v_3$, $v_5$, $v_7$ measure strictly
less than $180^\circ$ each.
Indeed if one of these angles were equal to $180^\circ$, we would have three points on a line; this contradicts our assumption of general position.
Now suppose that one of these angles, say, $\measuredangle w_1 v_7 w_4
> 180^\circ$.
By definition of the witness Gabriel drawing, we have $\measuredangle v_1 w_1 v_7 < 90^\circ$ and $\measuredangle v_1 w_1 v_8 \geq 90^\circ$ (see Figure~\ref{GG4PartiteProof}).
\begin{figure}
  \centering
  \input{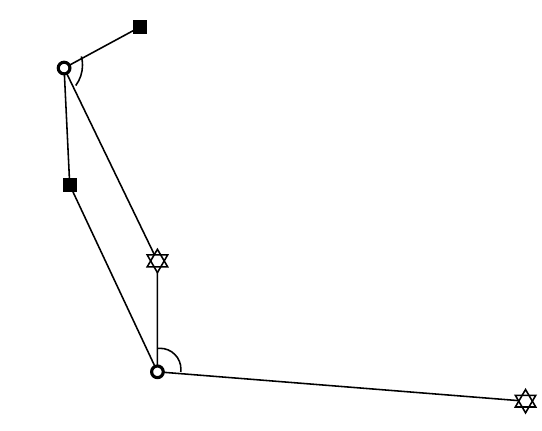_t}
  \caption{Detail of the proof of Theorem \ref{$K_{2,2,2,2}$}.}
  \label{GG4PartiteProof}
\end{figure}
As $\measuredangle v_7 w_4 v_6 \geq 90^\circ$, we would have $\measuredangle v_8 w_4 v_6 > 90^\circ$, a contradiction.

Therefore the sum of the interior angles of this octagon is less that $1080^\circ$, which is impossible.
\end{proof}

The constraints described in the preceding results lead to a graph
that is not drawable:
\begin{theorem}
  There is no witness Gabriel drawing of $K_{3,3,3,3}$.
\end{theorem}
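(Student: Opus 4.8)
The plan is to suppose, for contradiction, that $\GG^-(P,W)\simeq K_{3,3,3,3}$, and to force a contradiction out of the structural results above, chiefly Lemma~\ref{$K_{2,2,2,2}$} and Lemma~\ref{GGK222}. First I record that an induced subgraph of a witness Gabriel graph, taken with the same witness set, is again a witness Gabriel graph, since deleting vertices of $P$ neither creates nor destroys edges among the remaining vertices; in particular, deleting one vertex from each color class of $K_{3,3,3,3}$ yields a witness Gabriel drawing of $K_{2,2,2,2}$. Then I apply Lemma~\ref{order} to fix a quasi-ordering $P_1\prec\cdots\prec P_s$ of the vertex set into monochromatic groups, $s\geq 4$; since all four colors occur and consecutive groups differ in color, $s=4$ forces the four groups to be exactly the four color classes, each of size three.

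Now I split on $s$. If $s=4$, pick two vertices from each group: by clause~(b) of the quasi-ordering these eight points are in convex position, with the two vertices of each color consecutive along their hull, so the induced $K_{2,2,2,2}$ contradicts Lemma~\ref{$K_{2,2,2,2}$}. If $s\geq 5$ I run a case analysis on how the twelve vertices distribute among the groups, using two moves. Move~(a): choose two vertices of each color so that the eight chosen points are in convex position with the two of each color consecutive along their hull --- for instance two from a group of size at least $2$ of that color, skipping the other groups --- and invoke Lemma~\ref{$K_{2,2,2,2}$}. Move~(b): exhibit two same-colored vertices $u,v$ whose line has two vertices of a single color strictly on one side and two of a single color strictly on the other, and invoke Lemma~\ref{GGK222}; the side structure here is read straight off the quasi-ordering, since a line through $u\in P_i$ and $v\in P_j$ separates $\bigcup_{i<k<j}P_k$ from the complementary groups. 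A somewhat tedious check shows that move~(a) or~(b) applies in every case except the fully uniform one: $s=12$ with all twelve vertices on the convex hull in cyclic color order $ABCDABCDABCD$ (and perhaps a few equally symmetric relatives).

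The uniform case is where I expect the real difficulty, since there neither~(a) nor~(b) fires. My plan for it is to show first that $\CH(P)$ contains no witness at all: any eight consecutive vertices contain exactly two of each color and hence induce a $K_{2,2,2,2}$, so Lemma~\ref{K2222 w} forbids a witness in their convex hull, and since any point of a convex $12$-gon lies in the convex hull of some eight consecutive vertices (the three vertices of any triangle containing it fit within a window of eight), these hulls cover $\CH(P)$. Hence each of the twelve monochromatic edges is removed by a witness lying strictly outside $\CH(P)$, and I finish with a global angle count in the spirit of the octagon argument in the proof of Lemma~\ref{$K_{2,2,2,2}$}: build a polygon alternating between consecutive hull vertices and the witnesses that remove the monochromatic hull edges, observe that each such witness sees the edge it removes under an angle exceeding $90^\circ$ while contributing only a small interior angle to the polygon (it lies outside the hull and in no bichromatic diametral disk), and conclude that the interior angles of this polygon sum to strictly less than $(k-2)\cdot 180^\circ$ for $k$ its number of sides, which is impossible. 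The crux is precisely that both convenient reductions to the $K_{2,2,2,2}$ lemmas collapse for the uniform configuration, so it must be eliminated by a self-contained, genuinely global counting argument rather than by passing to a smaller forbidden pattern.
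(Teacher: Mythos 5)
Your overall strategy --- quasi-order the vertices via Lemma~\ref{order}, use Lemma~\ref{GGK222} and the induced-$K_{2,2,2,2}$ observation to whittle the possibilities down to the interleaved convex configuration, then kill that configuration by an angle count against witnesses outside the hull --- is exactly the paper's, and your preliminary observations (an induced subgraph on the same witness set is again a witness Gabriel graph; the $s=4$ case dies by selecting two vertices per group) are sound. But the step where the real work happens fails as written. In the surviving configuration the twelve vertices are in convex position with consecutive vertices of \emph{different} colors, so there are no ``monochromatic hull edges'': the twelve monochromatic non-edges are chords joining vertices four apart on the hull. Consequently the polygon you propose, ``alternating between consecutive hull vertices and the witnesses that remove the monochromatic hull edges,'' does not exist, and the octagon argument from Lemma~\ref{$K_{2,2,2,2}$} does not transplant in this form. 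The paper's fix is different in an essential way: for each color class $\{b_1,b_2,b_3\}$ it forms the hexagon $b_1w_1b_2w_2b_3w_3$ with the three witnesses killing that color's three non-edges, notes that the angles at the $w_i$ are each at least $90^\circ$ so the angles at the $b_i$ sum to at most $450^\circ$, sums this over the four colors to get at most $1800^\circ$, and plays that against the fact that the interior angles of the convex $12$-gon sum to exactly $1800^\circ$: each hexagon must subtend, at its graph vertices, at least the corresponding interior angles of the $12$-gon, or else some $w_i$ would see a bichromatic pair at more than $90^\circ$ and delete a required edge, and general position then turns the resulting collision of inequalities into a strict contradiction. That per-color hexagon decomposition plus the global comparison against the $12$-gon is the missing idea; without it your plan for the uniform case contains no actual argument.

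Two smaller gaps. First, your case analysis for $5\le s\le 12$ is asserted rather than performed; the paper carries it out explicitly (Figures~\ref{GGK3333CaseAnalysis1}--\ref{GGK3333CaseAnalysis111}), and since this is where Lemma~\ref{GGK222} does all of its work, the ``somewhat tedious check'' cannot be waved away. Second, your justification that the hulls of eight consecutive vertices cover $\CH(P)$ is wrong as stated: the triangle $v_1v_5v_9$ does not fit in any window of eight consecutive vertices of a $12$-gon. The conclusion that no witness lies inside $\CH(P)$ is still true, but it is cleaner to get it directly from Lemma~\ref{wedge} (in the spirit of Lemma~\ref{K2222 w}) by fanning a triangulation from one vertex, every triangle of which carries at least two bichromatic, hence present, edges.
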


\begin{proof}
 
 
 Assume such a drawing exists. We consider the ordering of the colors of vertices of $K_{3,3,3,3}$, in the sense of Lemma~\ref{order}. In the case analysis below, we argue that the only ordering of the vertices consistent with Lemmas~\ref{K2222 w} and \ref{$K_{2,2,2,2}$} is such that all the vertices are in convex position and between every pair of consecutive vertices of one color, there is exactly one vertex of every other color (see Figure~\ref{GGK3333bb}). 
\begin{figure}
  \centering
  \input{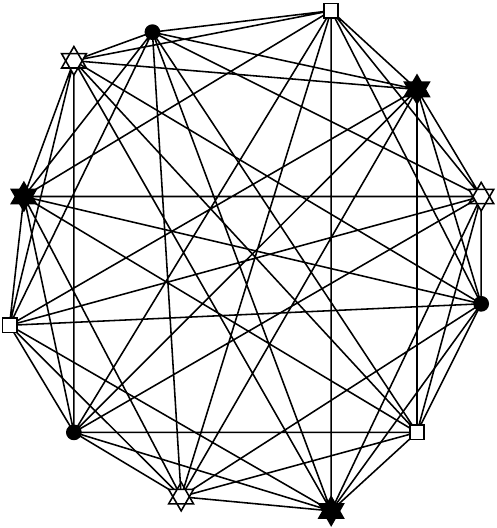_t}
  \caption{A tentative witness Gabriel drawing of $K_{3,3,3,3}$.}
  \label{GGK3333bb}
\end{figure}

 

All the possible ways to order three points of two different colors
using the ordering defined in Lemma~\ref{order} are in
Figure~\ref{GGK3333CaseAnalysis1}; notice in two of the three cases,
the points must be in convex position by Lemma~\ref{order}, and in the
remaining (middle) case we must have the colors separated by a line and situated so that any choice of two points of each color is in convex position.
\begin{figure}
  \centering
  \input{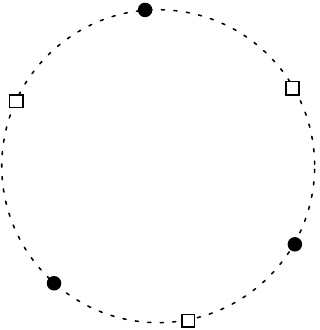_t}%
  \hspace{1cm}%
  \input{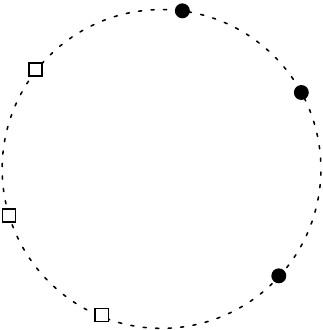_t}%
  \hspace{1cm}%
  \input{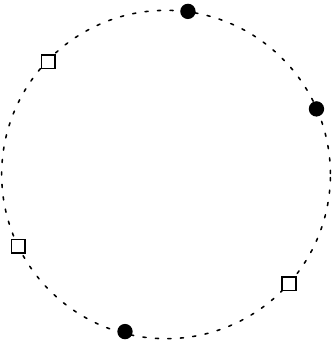_t}
  \caption{All possible ways to order the vertices of a witness
    Gabriel drawing of $K_{3,3}$.}
  \label{GGK3333CaseAnalysis1}
\end{figure}
All the ways to add three points of a third color to the cases in
Figure~\ref{GGK3333CaseAnalysis1} without violating Lemma~\ref{GGK222}
are in Figure~\ref{GGK3333CaseAnalysis11}.  We draw the points on a
circle for ease of visualization.  Again, they must be in convex
position unless there is a group of three consecutive points of the
same color (second and third figures in the top row), in which case
these points need not all appear on the convex hull of the entire set;
see Lemma~\ref{order}. 
\begin{figure}
  \centering
  \input{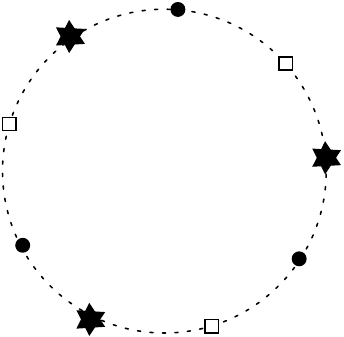_t}
  \hspace{1cm}
  \input{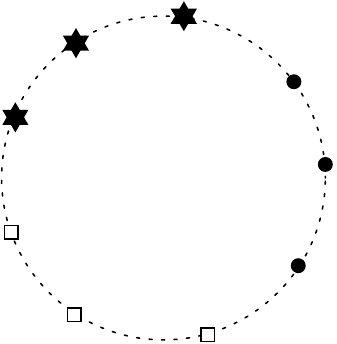_t}
  \hspace{1cm}
  \input{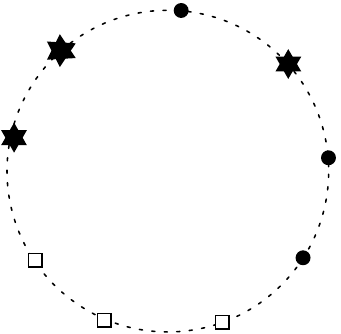_t}\\[1cm]
  \input{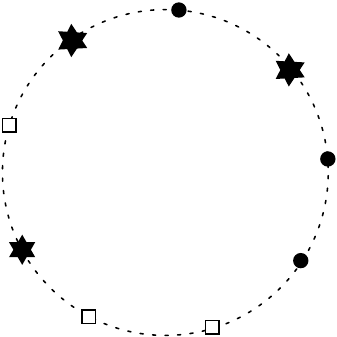_t}
  \hspace{1cm}
  \input{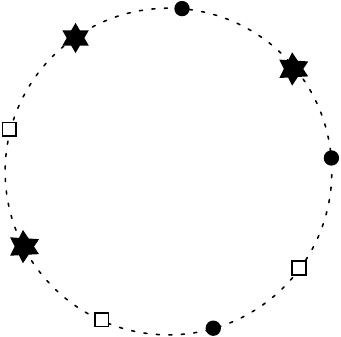_t}
  \caption{All possible ways to order the vertices of a witness
    Gabriel drawing of $K_{3,3,3}$.}
  \label{GGK3333CaseAnalysis11}
\end{figure}

There is only one way to add three points of a fourth color to the set of points of Figure~\ref{GGK3333CaseAnalysis11} without violating Lemma~\ref{GGK222} and Lemma~\ref{$K_{2,2,2,2}$} (see Figure~\ref{GGK3333CaseAnalysis111}).
\begin{figure}
  \centering
  \input{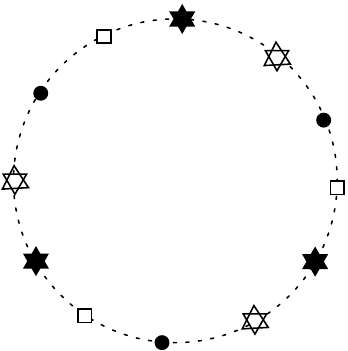_t}
  \caption{The only way to order the vertices of a witness Gabriel
    drawing of a $K_{3,3,3,3}$.}
  \label{GGK3333CaseAnalysis111}
\end{figure}
Notice that in this Gabriel drawing of  $K_{3,3,3,3}$, by Lemma~\ref{2 vertices}, all the points are in convex position.

Now we will show that the tentative witness Gabriel drawing of
$K_{3,3,3,3}$, depicted in Figure~\ref{GGK3333bb}, where vertices are
in convex position and such that between every pair of consecutive vertices
of one color there is exactly one vertex of each other color, cannot
be realized.

Consider the hexagon formed by the three black-star vertices $b_1$, $b_2$, and $b_3$, and the three witnesses $w_1$, $w_2$, and $w_3$  that remove the edges $b_1 b_2$, $b_2 b_3$, and $b_3 b_1$, respectively (see Figure~\ref{GGK3333c}).
\begin{figure}
  \centering
  \input{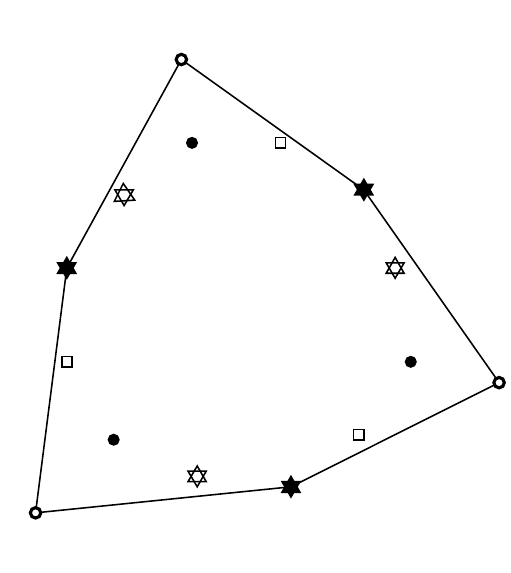_t}
  \caption{tentative Witness Gabriel drawing of $K_{3,3,3,3}$.}
  \label{GGK3333c}
\end{figure}
The three witnesses are distinct as otherwise they would remove some
bichromatic edges.  The measure of each of the three interior angles
$\angle b_1 w_1 b_2$, $\angle b_2 w_2 b_3$, and $\angle b_3 w_3 b_1$
are at least $90^\circ$.  The sum of the measures of interior angles
in a hexagon is $ 720^\circ$.  Therefore, $\measuredangle w_3 b_1
w_1$, $\measuredangle w_1 b_2 w_2$, and $\measuredangle w_2 b_3 w_3$
sum up to at most $450^\circ$.

If one repeats the argument for each of the four colors and their
corresponding witnesses, one obtains that the sum of the interior
angles such that the vertex of the angle is a vertex of the graph
adjacent to two witnesses, is at most $1800^\circ$.  However, the sum
of the interior angles of a $12$-gon that is the convex hull of the
vertices equals $1800^\circ$.  Therefore for at least one color, say
black-star without loss of generality, and because of our assumption
of general position, at least one point of another color will be
outside of the hexagon $ b_1 w_1 b_2 w_2 b_3 w_3$, and a bi-chromatic
edge will be eliminated.  
\end{proof}

From the preceding result we immediately obtain the following:
\begin{cor}
  No graph containing $K_{3,3,3,3}$ as an induced subgraph can be drawn as a witness Gabriel
  graph. In particular, there is no witness Gabriel drawing of
  $K_{p,q,r,s}$ for $p,q,r,s \geq 3$.
\end{cor}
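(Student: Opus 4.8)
The plan is to derive this corollary purely combinatorially from the preceding theorem, using a hereditary property of witness Gabriel graphs. First I would record the following observation: if $G$ admits a witness Gabriel drawing via a point set $P$ and witness set $W$, and $S \subseteq V(G)$ is any subset of vertices, then the induced subgraph $G[S]$ also admits a witness Gabriel drawing. Indeed, let $P_S \subseteq P$ be the set of points representing the vertices in $S$, and keep the same witness set $W$. For any two points $a,b \in P_S$, the disk $D_{ab}$ is unchanged, and it contains a point of $W \setminus \{a,b\}$ if and only if it did in the original drawing; hence $ab$ is an edge of $\GG^{-}(P_S,W)$ exactly when it is an edge of $\GG^{-}(P,W)$. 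Thus $\GG^{-}(P_S,W)$ is isomorphic to $G[S]$, so the class of witness-Gabriel-drawable graphs is closed under taking induced subgraphs.

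Next I would combine this observation with the theorem: if some graph $G$ contained $K_{3,3,3,3}$ as an induced subgraph and admitted a witness Gabriel drawing, then $K_{3,3,3,3}$ itself would admit one, contradicting the theorem. For the ``in particular'' clause, it remains only to note that $K_{p,q,r,s}$ with $p,q,r,s \ge 3$ contains $K_{3,3,3,3}$ as an induced subgraph: pick any three vertices from each of the four colour classes; the subgraph induced on these twelve vertices has an edge between two chosen vertices precisely when they lie in different classes, i.e.\ it is exactly $K_{3,3,3,3}$. Hence no such $K_{p,q,r,s}$ is witness-Gabriel-drawable.

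There is no genuine obstacle here; all the difficulty has already been absorbed into the theorem. The only point requiring (brief) care is the verification above that deleting vertices of $P$ while retaining every witness neither creates a spurious edge nor destroys an existing one, which is immediate because adjacency in $\GG^{-}(P,W)$ is a condition on the diametral disk and the witness set alone, and does not refer to the other vertices of $P$.
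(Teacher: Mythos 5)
Your proof is correct and matches the paper's intent: the paper states this corollary as an immediate consequence of the $K_{3,3,3,3}$ theorem, and the heredity argument you spell out (delete vertices from $P$, keep all of $W$; adjacency in $\GG^{-}(P,W)$ depends only on the diametral disk and the witnesses) is exactly the implicit justification. Nothing further is needed.
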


\section{Construction Algorithms}

In this section we describe two algorithms to compute the witness
Gabriel graph $\GG^{-}(P,W)$ from two given sets of points $P$ and $W$.

\begin{theorem}
  Given two point sets $P,W$ with $\Pabs{P}+\Pabs{W} =n$, the graph $\GG^{-}(P,W)$
  can be computed in $\Theta (n^2)$ time.
 
\end{theorem}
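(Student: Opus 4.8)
The plan is to prove matching lower and upper bounds. The $\Omega(n^2)$ bound is forced by the output size: for $W=\varnothing$ we have $\GG^-(P,\varnothing)=K_n$, which has $\binom{n}{2}$ edges, so any algorithm that reports the adjacency structure must spend $\Omega(n^2)$ time; moreover a single sufficiently distant witness leaves the graph equal to $K_{n-1}$, so the bound survives even if at least one witness is required. It therefore remains to give an $O(n^2)$-time algorithm.

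I would start from the following reformulation of \emph{adjacency as half-plane containment}. For a vertex $a\in P$ and a point $w$, one has $w\in D_{ab}$ iff $(w-a)\cdot(w-b)\le 0$, equivalently iff $b$ lies in the closed half-plane $H(a,w):=\{x:(w-a)\cdot x\ge(w-a)\cdot w\}$ whose bounding line passes through $w$ perpendicular to $aw$ and whose complement contains $a$. Hence $ab$ is an edge of $\GG^-(P,W)$ if and only if $b$ avoids every $H(a,w)$, that is, $b$ lies in the convex region
\[
  R_a\;:=\;\bigcap_{w\in W}\bigl(\mathbb R^2\setminus H(a,w)\bigr),
\]
an intersection of $\Pabs W$ open half-planes. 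Each of these half-planes contains $a$ in its interior (since $(w-a)\cdot a<(w-a)\cdot w$), so $R_a$ is an open convex polygon, possibly unbounded, containing $a$; it is described by a radial function $g_a(\varphi)\in(0,\infty]$, the distance from $a$ to $\partial R_a$ in direction $\varphi$, a piecewise-$\sec$ curve with $O(\Pabs W)$ breakpoints. Then $ab$ is an edge precisely when $\Pabs{ab}<g_a(\varphi_b)$, where $\varphi_b$ is the direction of $b$ from $a$. Carrying this out per vertex with a generic $O(\Pabs W\log\Pabs W)$ half-plane intersection followed by an $O(\log\Pabs W)$ point location for each $b$ already gives $O(n^2\log n)$; the real task is to remove the logarithmic factor.

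The two ingredients that accomplish this are: (i) a single $O(n^2)$-time preprocessing step that computes, for every $a\in P$, the circular order of $P\cup W\setminus\{a\}$ by angle around $a$ — classically obtained by building the arrangement of the $n$ dual lines and walking each dual line through it, since the crossings encountered along the dual of $a$ appear in order of the slope of the corresponding connecting segment, i.e.\ in angular order around $a$; and (ii) the observation that, once the witnesses are listed in angular order around a fixed $a$ — which is exactly the bounding lines of the $H(a,w)$ sorted by normal direction — the region $R_a$, equivalently the envelope $g_a$, is computed in $O(\Pabs W)$ time by the standard incremental (deque-based) half-plane-intersection routine for angularly sorted half-planes. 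With the vertices $b\in P$ also available in angular order around $a$, a single synchronized scan of that list against the $O(\Pabs W)$ breakpoints of $g_a$ evaluates $g_a(\varphi_b)$ and decides every edge $ab$ in $O(\Pabs P+\Pabs W)$ total. Summing over the $\Pabs P\le n$ choices of $a$ gives $O(n^2)$ after the $O(n^2)$ preprocessing.

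The main obstacle is really ingredient (i): without the global $O(n^2)$ computation of all angular orders one is stuck at $O(n^2\log n)$, and one must verify that both the half-plane intersection of step (ii) and the query scan genuinely re-use this order rather than re-sorting. The remaining points are routine and do not affect the bound: directions in which $R_a$ is unbounded (then $g_a=\infty$ and every vertex in that angular wedge is adjacent to $a$); witnesses coinciding with $a$ (which block no edge at $a$ and are simply dropped while processing $a$); and boundary coincidences such as $b\in\partial R_a$ — all of which are precluded by, or are easily handled under, the standing general-position assumption.
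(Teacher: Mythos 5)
Your proposal is correct and matches the paper's first algorithm essentially verbatim: the region $R_a$ is the paper's $I_p$ (intersection of half-planes through each witness perpendicular to the segment to $p$), the angular orders around all points are obtained in $O(n^2)$ total from the dual line arrangement, and the lower bound is the same output-size argument. Your write-up merely fills in the implementation details (the deque-based linear-time intersection of angularly sorted half-planes and the synchronized scan) that the paper asserts without elaboration.
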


 It is clear that in the worst case $\Omega (n^2)$  time is required, since the graph may have $\Theta(n^2)$ edges.

\paragraph*{First algorithm:} For each point $p \in P$, do
the following: For each point $q \in W$, draw the line $l_q$ through
$q$, perpendicular to $pq$.  Consider the interior of the intersection
$I_p$ of the half-planes containing $p$ bounded by the lines
$l_q$, $\forall q \in W$.  Then, an edge $pr$, $r \in P \setminus
\{p\}$, is in $\GG^{-}(P,W)$ if and only if $r \in I_p$ (see Figure~\ref{buildGGG}).
Indeed any point $r \in P \setminus \{p\}$ in the interior of $I_p$ will make an angle $\angle r q p < 90^\circ$ with any $q \in W$.
On the other hand, any point $r \in P \setminus \{p\}$ on the boundary or outside $I_p$, will make an angle $\angle r q p \geq 90^\circ$ for at least one $q \in W$.

Once we have computed the circular ordering of points in $P \cup W$
around $p$, we can compute $I_p$ and identify all edges $pr$ in linear
time, for a fixed $p$.  The circular ordering, for all $p$, can be
computed in quadratic time by standard methods using the dual
arrangement of $P \cup W$ \cite{EOS86}.

\begin{figure}
  \centering
  \input{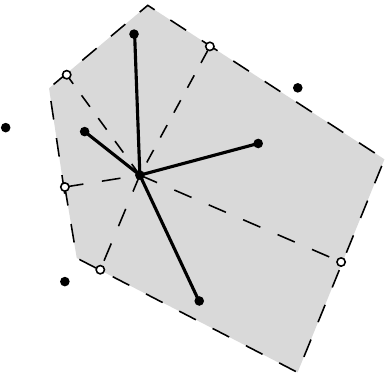_t}
  \caption{The first algorithm to build $\GG^{-}(P,W)$. Black points
    are in $ P$ and white points in $ W$.}
  \label{buildGGG}
\end{figure}

\paragraph*{Second algorithm:} Build the Voronoi diagram
$\Vor(W)$ of the points in $W$.  For each $p \in P$, add the point
$p$ to $\Vor(W)$ and consider all segments of the form $pr$ with $r
\in P \setminus \{p\} $.  For each edge $pr$ take the midpoint
$m(p,r)$.  Observe that $m(p,r)$ is in the Voronoi cell of $p$ in
$\Vor(W\cup \{p\})$ if and only if the edge $pr$ is in $\GG^{-}(P,W)$.
The algorithm can be implemented to run in quadratic time using
standard tools.  Again, it is useful to have the circular ordering
of all points in $P \cup W$ around each point in $P$.

As a final observation, it is worth mentioning an algorithm that would
be more efficient in some cases, but not in the worst case.  
The \emph{witness Delaunay graph} of a point set $P$
in the plane, with respect to point set $W$ of witnesses, denoted
$\DG^{-}(P,W)$, is the graph with vertex set $P$ in which two points
$x,y\in P$ are adjacent when there is a disk whose boundary passes
through $x$ and $y$ and whose interior does not contain any witness
$q\in W$.  This graph was 
introduced in \cite{ADH}, and an algorithm for its computation with
running time $O(e \log n + n \log^2 n)$, where $e$ is the number of
edges in the graph, was also described there.

Now, as $\GG^{-}(P,W)$ is a subgraph of $\DG^{-}(P,W)$, once the latter
graph has been computed we can easily check in $O(\log n)$ time
whether one of its edges, say $pq$ belongs to $\GG^{-}(P,W)$: if $m$ is the
midpoint of $pq$, we only have to find the point $z$ from $W$ which is
closest to $m$, which can be achieved by standard point location in
$\Vor(W)$.  Once $z$ has been obtained, $pq\in \GG^{-}(P,W)$ if and only if
$d(m,z)>d(m,p)$. Therefore $\GG^{-}(P,W)$ can be computed in additional $O(e
\log n)$ time, once $\DG^{-}(P,W)$ is available.  To summarize, we can
compute $\GG^{-}(P,W)$ in time $O(e \log n + n \log^2 n)$, where $e\geq k$ is the number of edges in $\DG^{-}(P,W)$ and $k$ is the number of edges in $\GG^{-}(P,W)$.

\section{Verification Algorithm}

In this section we present an algorithm to verify whether a graph
$G=(V,E)$ embedded in the plane can be a witness Gabriel graph
$\GG^{-}(V,W)$, for some suitable set of witnesses $W$.

\begin{theorem}
  \label{thm:verification}
  Given a straight-line graph $G=(V,E)$ embedded in the plane,
  checking if there exists a set of witnesses $W$ so that $G$
  coincides with $\GG^{-}(V,W)$ can be done in $O(\Pabs{V}^2 \log \Pabs{E})$ time; if
  the answer is positive, such a set of witnesses $W$ can be computed
  within the same time bounds.
\end{theorem}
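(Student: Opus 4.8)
The plan is to reduce the decision to one independent, purely geometric test per non-adjacent pair of vertices, and then to show that each test costs $O(\log\Pabs{E})$ after near-linear preprocessing. For a pair $xy$ let $D_{xy}$ be the closed disk with diameter $xy$, and set $R:=\bigcup_{ab\in E}(D_{ab}\setminus\{a,b\})$ and $U:=\bigcup_{ab\in E}D_{ab}$. A set $W$ satisfies $\GG^{-}(V,W)=G$ exactly when (i)~$W\cap R=\varnothing$, which is precisely what keeps every edge of $E$, and (ii)~for each non-edge $cd$ some witness of $W$ lies in $D_{cd}\setminus\{c,d\}$. Condition~(i) is a single global restriction $W\subseteq\mathbb R^{2}\setminus R$, so the non-edges decouple: $G$ is realizable iff for every non-edge $cd$ the set $(D_{cd}\setminus\{c,d\})\setminus R$ is non-empty, and if so one assembles $W$ by picking one point from each. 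Since $R\subseteq U$ and the paper's standing convention forbids placing a witness at a vertex of $V$ (it would spoil general position), one checks that the effective set of admissible witness positions is the open set $U^{c}:=\mathbb R^{2}\setminus U$; hence the test for $cd$ simplifies to: is $D_{cd}\not\subseteq U$\,? When the answer is yes, $D_{cd}\cap U^{c}$ is a two-dimensional open set, any of whose points (chosen to avoid the finitely many positions that would break general position) is a legal witness for $cd$.

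Two standard facts make this test cheap. First, the $\Pabs{E}$ disks $D_{ab}$ form a family of pseudodisks, so the boundary $\partial U$ of their union consists of only $O(\Pabs{E})$ circular arcs and can be computed in $O(\Pabs{E}\log\Pabs{E})$ time. Second --- the geometric crux --- writing $m_{cd}$ and $r_{cd}$ for the center and radius of $D_{cd}$, one has $D_{cd}\subseteq U$ if and only if $m_{cd}\in U$ and $\mathrm{dist}(m_{cd},\partial U)\ge r_{cd}$: once the center lies in $U$ and the closed disk reaches no arc of $\partial U$, the disk --- being connected --- stays inside $U$, and it cannot even enclose a bounded hole of $U$, since the boundary of any such hole is part of $\partial U$. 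So I would precompute $\partial U$ together with a point-location structure for the planar subdivision it induces (to decide $m_{cd}\in U$) and a nearest-site structure for the arcs of $\partial U$ --- for instance their Voronoi diagram, again of size $O(\Pabs{E})$ and with $O(\log\Pabs{E})$ query time --- to obtain $\mathrm{dist}(m_{cd},\partial U)$ and the nearest boundary point. This preprocessing costs $O(\Pabs{E}\log\Pabs{E})$, which is $O(\Pabs{V}^{2}\log\Pabs{E})$ because $\Pabs{E}\le\binom{\Pabs{V}}{2}$.

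With these structures in hand, for each of the $O(\Pabs{V}^{2})$ non-edges $cd$ I query $m_{cd}$ in $O(\log\Pabs{E})$ time and call $cd$ \emph{good} iff $m_{cd}\notin U$ or $\mathrm{dist}(m_{cd},\partial U)<r_{cd}$; the algorithm returns ``yes'' exactly when every non-edge is good. In the affirmative case the same queries yield the witnesses: for a good non-edge $cd$ take $m_{cd}$ itself when $m_{cd}\notin U$, and otherwise push the nearest point $p^{*}\in\partial U$ returned by the query an infinitesimal amount into $U^{c}$; since $\mathrm{dist}(m_{cd},\partial U)<r_{cd}$ this lands in the interior of $D_{cd}$ and outside $U$, hence outside $R$. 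A final infinitesimal perturbation restores general position of $V\cup W$. The total running time is $O(\Pabs{V}^{2}\log\Pabs{E})$ and $\Pabs{W}\le\binom{\Pabs{V}}{2}-\Pabs{E}$.

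The main obstacle, I expect, is conceptual rather than computational: the arrangement of the $\Pabs{E}$ diametral circles can have $\Theta(\Pabs{E}^{2})$ complexity, far too large for the target bound, so one has to realize that only the \emph{union} $U$ is relevant and that its boundary is of linear size because diametral disks are pseudodisks --- this is exactly what collapses each per-non-edge test into a single nearest-neighbor-type query. The remaining fussy point is the bookkeeping forced by the closed-disk convention and by general position: one must verify that permitting a witness on some circle $\partial D_{ab}$ or at a vertex never adds realizable graphs, so that ``$D_{cd}\not\subseteq U$'' is indeed the right criterion.
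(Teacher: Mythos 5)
Your proposal is essentially the paper's own algorithm: form the union $U$ of the diametral disks of the edges, build a Voronoi/point-location structure on the arcs of $\partial U$, and for each non-edge test whether its diametral disk escapes $U$ by querying its center against the nearest boundary arc. You in fact justify two points the paper leaves implicit --- the linear complexity of $\partial U$ via the pseudodisk property, and the equivalence of ``$D_{cd}\subseteq U$'' with the center-distance test --- so the argument is correct and matches the paper's route.
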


\noindent
\textbf{Algorithm:} For each edge $pq$ in $G$, draw a disk $D_{pq}$
with diameter $pq$.  Take the union $U = \bigcup_{pq \in E(G)} D_{pq}$
of these disks.  Compute the Voronoi Diagram of the arcs and vertices of
the boundary of $U$ \cite{Yap}.  For each pair of vertices $r$ and $s$ such that
there is no edge between them in $G$, draw a disk $D_{rs}$ with
diameter $rs$.  Check if the center $c$ of $D_{rs}$ lies in $U$.  
If it does not, $c$ (or any point sufficiently close to it) is a valid witness for $rs$.
 If it does, find which cell $C$ of the Voronoi diagram contains $c$ and
check if the site of $C$ intersects $D_{rs}$. If the site of $C$ does not intersect $D_{rs}$,
 $D_{rs} \subset U$, and it is impossible to place a witness to eliminate $rs$
without also eliminating a legitimate edge of $G$.  Therefore $G$ is
not a witness Gabriel graph $\GG^{-}(V,W)$, for any $W$ (see Figure~\ref{checkGGG}).
\begin{figure}
  \centering
  \input{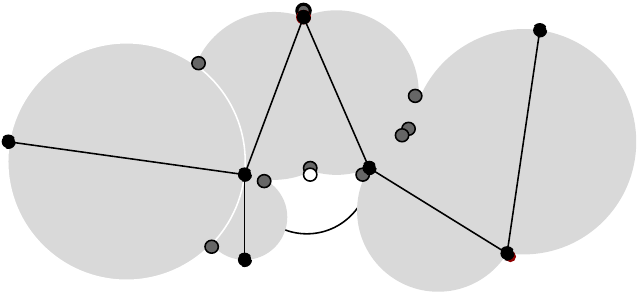_t}
  \caption{The algorithm to check if a geometric graph is a witness
    Gabriel drawing.  Black points are in $P$.}
  \label{checkGGG}
\end{figure}
Otherwise, a suitable witness in $D_{rs}\setminus U$ is easy to identify.
We continue to the next non-edge $rs$.

If none of the tests fail, we have produced a set $W$ of witnesses
such that $G=\GG^{-}(P,W)$.

The algorithm can be implemented to run in time $O(\Pabs{V}^2 \log \Pabs{E})$
using standard tools.

\section{Final Remarks}

We have described in this paper several properties of the witness
Gabriel graph, as well as algorithms for its computation and
verification.  However, we have omitted the description of some
extensions.  For example, as the standard Gabriel graph can be
extended to higher order, this can be done for the witness
generalization: In a \emph{witness $k$-Gabriel graph}, an edge $ab$,
$a,b \in P$, is in the graph if there are fewer than $k$ witnesses in
$D_{ab}\setminus \{a,b\}$.  Most of the preceding results can be
easily modified to provide the corresponding conclusions about witness
$k$-Gabriel graphs.

There are some obvious open problems left in this paper, such as
closing the gaps between some bounds. In particular, it would be
interesting to tighten the bounds in Theorem~\ref{thm:eliminate-all}
on the maximum number of witnesses needed to eliminate all edges in a
witness Gabriel graph.  Perhaps more embarrassingly, we have no linear
(nor, in fact \emph{any subquadratic}) upper bound on the number of
witnesses that are sufficient to realize an arbitrary witness Gabriel
graph (Theorem~\ref{thm:eliminate-some}).

On the algorithmic side, designing an \emph{output-sensitive}
algorithm for constructing a witness Gabriel graph, given its set of
vertices and witnesses, i.e., one whose running time depends on the
number of edges in the graph, is still an open problem.  An ideal
algorithm would pay a small, say, polylogarithmic, cost per additional
graph edge.  Another issue is whether finding the \emph{minimum}
number of witnesses required to realize a given geometric graph in the
plane as a witness Gabriel graph (as in
Theorem~\ref{thm:verification}) is NP-hard, or whether it can be solved
in polynomial time.

Finally, we also mention that some natural long-term goals, such as a
complete characterization of the class of witness Gabriel graphs or
the design of efficient algorithms testing graphs for membership,
remain elusive to date, which, on the other hand, is a common
situation for most classes of standard proximity graphs.

\bibliography{bqual}
\end{document}